\def\R{\mathbb{R}}
\def\argmin{\mathop{\rm arg\, min}}
\def\eps{\varepsilon}
\def\E{{\mathbb E}}
\def\P{{\mathcal P}}
\def\Q{{\mathcal Q}}
\def\T{{\mathcal T}}
\def\X{{\mathcal X}}
\def\L{{\mathcal L}}
\def\Z{{\mathcal Z}}
\def\U{{\mathcal U}}
\def\V{{\mathcal V}}
\def\sBer{{\mathsf{Bernoulli}}}
\def\sK{{\mathsf K}}
\def\sE{{\mathsf E}}
\def\tv{{\mathsf {TV}}}
\def\kl{{\mathsf {KL}}}
\newcommand{\eq}[1]{\begin{equation*}
#1
\end{equation*}}
\newcommand{\eqn}[2]{\begin{equation}
\label{#1}
#2
\end{equation}}
\newcommand{\al}[1]{\begin{align*}
#1
\end{align*}}
\newcommand{\aln}[1]{\begin{align}
#1
\end{align}}
\newcounter{example}
\newenvironment{example}[1][]{\refstepcounter{example}\par\medskip
   \noindent \textit{Example~\theexample. #1} \rmfamily}{\medskip}
\newtheorem{definition}{Definition}
\newtheorem{theorem}{Theorem}
\newtheorem{corollary}{Corollary}
\newtheorem{lemma}{Lemma}
\newtheorem{remark}{Remark}
\newcommand{\markov}{\mathrel\multimap\joinrel\mathrel-%
\mspace{-9mu}\joinrel\mathrel-}
\tikzstyle{RectObject}=[rectangle,fill=white,draw,line width=0.2mm]
\tikzstyle{line}=[draw]
\tikzstyle{arrow}=[draw, -latex]
\DeclareFontFamily{U}{BOONDOX-calo}{\skewchar\font=45 }
\DeclareFontShape{U}{BOONDOX-calo}{m}{n}{
	<-> s*[1.05] BOONDOX-r-calo}{}
\DeclareFontShape{U}{BOONDOX-calo}{b}{n}{
	<-> s*[1.05] BOONDOX-b-calo}{}
\DeclareMathAlphabet{\mathcalboondox}{U}{BOONDOX-calo}{m}{n}
\SetMathAlphabet{\mathcalboondox}{bold}{U}{BOONDOX-calo}{b}{n}
\DeclareMathAlphabet{\mathbcalboondox}{U}{BOONDOX-calo}{b}{n}
\definecolor{DukeBlue}{HTML}{001A57}
\definecolor{DarkRed}{rgb}{0.75, 0.0, 0.0}
\definecolor{DarkGreen}{rgb}{0.0, 0.5, 0.0}
\author{}
\date{}
\begin{document}

\title{\vspace{5.5mm}Local Differential Privacy Is Equivalent to Contraction of $\sE_\gamma$-Divergence}
\author{%
 Shahab Asoodeh${}^\dagger$, Maryam Aliakbarpour${}^*$, and Flavio P. Calmon${}^\dagger$ \\
 \small ${}^\dagger$Harvard University, ${}^*$University of Massachusetts Amherst}

	\maketitle
\begin{abstract}
We investigate the local differential privacy (LDP) guarantees of a randomized privacy mechanism via its contraction properties. We first show that LDP constraints  can be equivalently cast in terms of the contraction coefficient of the $\sE_\gamma$-divergence. We then use this equivalent formula to express  LDP guarantees of privacy mechanisms in terms of contraction coefficients of arbitrary $f$-divergences. When combined with standard estimation-theoretic tools (such as Le Cam's and Fano's converse methods), this result allows us to study the trade-off between privacy and utility in several testing and minimax and Bayesian estimation problems.   
\end{abstract}

\section{Introduction}
A major challenge in modern machine learning applications is balancing   statistical efficiency with the privacy of individuals from whom data is obtained. In such applications, privacy is often quantified in terms of Differential Privacy (DP) \cite{Dwork_Calibration}. DP has several variants, including approximate DP \cite{Dwork-OurData}, Rényi DP \cite{RenyiDP}, and others \cite{ZeroDP, Concentrated_Dwork, GaussianDP, asoodeh2020variants}. Arguably, the most stringent flavor of DP is \textit{local differential privacy}  (LDP) \cite{evfimievski2003limiting, Shiva_subsampling}. Intuitively, a randomized mechanism (or a Markov kernel) is said to be locally differentially private  if its output does not vary significantly with arbitrary perturbations of the input.  

More precisely, a mechanism is said to be $\eps$-LDP (or \textit{pure} LDP) if the privacy loss random variable, defined as the log-likelihood ratio of the output for any two different inputs, is smaller than $\eps$ with probability one.  One can also consider an \textit{approximate} variant of this constraint: $\sK$ is said to be $(\eps, \delta)$-LDP if the privacy loss random variable does not exceed $\eps$ with probability at least $1-\delta$ (see Def.~\ref{Def:LDP} for the formal definition).

The study of statistical efficiency under LDP constraints has gained considerable traction, e.g., \cite{Shiva_subsampling, Duchi_LDP_MinimaxRates,LDP_gaboardi19a, Duchi_Federatedprotection, kairouz2014extremal_JMLR,LDP_Fisher, LDP_Acharya1, LDP_DistributionEstimation, LDP_LinearRegression,rohde2020, evfimievski2003limiting}.  Almost all of these works consider $\eps$-LDP and provide meaningful bounds
only for sufficiently small values of $\eps$ (i.e., the high privacy regime). 
For instance, Duchi et al.\ \cite{Duchi_LDP_MinimaxRates} studied minimax estimation problems under $\eps$-LDP constraints and showed that for $\eps\leq 1$, the price of privacy is to reduce the effective sample size from $n$ to $\eps^2 n$. A slightly improved version of this result appeared in \cite{kairouz16_LDPEstimation, kairouz2014extremal_JMLR}.  
More recently, Duchi and Rogers \cite{Duchi_interactivity}
developed a framework based on the \textit{strong} data processing inequality (SDPI) \cite{ahlswede1976} and derived lower bounds for minimax estimation risk under $\eps$-LDP that hold for any $\eps\geq 0$.  

In this work, we develop an SDPI-based  framework for studying hypothesis testing and estimation problems under $(\eps,\delta)$-LDP, extending the results of \cite{Duchi_interactivity} to  approximate LDP. In particular, we derive bounds for both the minimax and Bayesian  estimation risks that hold for any $\eps\geq 0$ and $\delta\geq 0$. Interestingly, when setting $\delta = 0$, our bounds can be slightly stronger than \cite{Duchi_LDP_MinimaxRates}.

Our main mathematical tool is   an equivalent expression for DP in terms of 
$\sE_\gamma$-divergence. 
Given $\gamma\geq 1$, the $\sE_\gamma$-divergence between two distributions $P$ and $Q$ is defined as 
\begin{equation}\label{Def:E_Gamma}
 \sE_\gamma(P\|Q) \coloneqq   \frac{1}{2}\int|\text{d}P-\gamma \text{d}Q| - \frac{1}{2}(\gamma-1).  
\end{equation}
We show that a mechanism $\sK$ is $(\eps, \delta)$-LDP \textit{if and only if} $$\sE_\gamma(P\sK\|Q\sK)\leq \delta \sE_\gamma(P\|Q)$$ for $\gamma = e^\eps$ and any pairs of distributions $(P, Q)$ where $P\sK$ represents the output distribution of $\sK$ when the input distribution is $P$. Thus, the approximate LDP guarantee of a mechanism can be fully characterized by its contraction under $\sE_\gamma$-divergence.  When combined with
standard statistical techniques, including Le Cam's and Fano's methods \cite{Yu1997, Tsybakov_Book}, $\sE_\gamma$-contraction leads to general lower bounds for the minimax and Bayesian risk under $(\eps, \delta)$-LDP for any $\eps\geq 0$ and $\delta\in [0,1]$. In particular, we show that the price of privacy in this case is to reduce the sample size from $n$ to $n[1-e^{-\eps}(1-\delta)]$. 

There exists several results connecting pure LDP to  the contraction properties of KL divergence $D_\kl$ and total variation distance $\tv$. For instance, for any $\eps$-LDP mechanism $\sK$, it is shown in \cite[Theorem 1]{Duchi_LDP_MinimaxRates} that $D_\kl(P\sK\|Q\sK)\leq 2(e^\eps-1)^2\tv^2(P, Q)$ and in \cite[Theorem 6]{kairouz2014extremal_JMLR} that $\tv(P\sK\|Q\sK)\leq \frac{e^\eps-1}{e^\eps + 1}\tv(P, Q)$ for any pairs $(P, Q)$. 
Inspired by these results,   we further show that if $\sK$ is $(\eps, \delta)$-LDP then $D_f(P\sK\|Q\sK)\leq [1-e^{-\eps}(1-\delta)] D_f(P\|Q)$ for any \textit{arbitrary} $f$-divergences $D_f$ and any pairs $(P, Q)$. 

\noindent\textbf{Notation.} For a random variable $X$, we write $P_X$ and $\X$ for its distribution (i.e., $X\sim P_X$) and its alphabet, respectively. 
For any set $A$, we denote by $\P(A)$ the set of all probability distributions on $A$.    
Given two sets $\X$ and $\Z$, a Markov kernel (i.e., channel) $\sK$ is a mapping from $\X$ to $\P(\Z)$ given by $x\mapsto \sK(\cdot|x)$.
Given $P\in \P(\X)$ and a Markov kernel $\sK:\X\to \P(\Z)$, we let $P\sK$ denote the output distribution of $\sK$ when the input distribution is $P$, i.e., $P\sK(\cdot) = \int\sK(\cdot|x)P(\text{d}x)$. Also, we use $\mathsf{BSC}(\omega)$ to denote the binary symmetric channel with crossover probability $\omega$. For sequences $\{a_n\}$ and $\{b_n\}$, we use $a_n\gtrsim b_n$ to indicate $a_n\geq C b_n$ for some universal constant $C$.

\section{Preliminaries}\label{Sec:Background}
\subsection{$f$-Divergences} 
Given a convex function $f:(0,\infty)\to\mathbb{R}$ such that $f(1)=0$, the $f$-divergence between two probability measures $P\ll Q$ is defined as \cite{Csiszar67, Ali1966AGC}
\eqn{}{D_f(P\|Q)\coloneqq \E_Q\Big[f\big(\frac{\textnormal{d}P}{\textnormal{d}Q}\big)\Big].}
Due to convexity of $f$, we have $D_f(P\|Q)\geq f(1) = 0$. If, furthermore, $f$ is strictly convex at $1$, then equality holds if and only $P=Q$. Popular examples of $f$-divergences include $f(t) = t\log t$ corresponding to KL divergence, $f(t) = |t-1|$ corresponding to total variation distance, and $f(t) = t^2-1$ corresponding to $\chi^2$-divergence. 
In this paper, we mostly concern with an important  sub-family of $f$-divergences
associated with $f_\gamma(t) = \max\{t-\gamma, 0\}$
for a parameter $\gamma\geq 1$. 
The corresponding $f$-divergence, denoted by $\sE_\gamma(P\|Q)$, is called $\sE_\gamma$-divergence (or sometimes {\em hockey-stick divergence} \cite{hockey_stick}) and is explicitly defined in \eqref{Def:E_Gamma}. It  appeared in \cite{polyanskiy2010channel} for proving converse channel coding results and also used in \cite{Balle:Subsampling, Balle2019mixing, asoodeh2020online, asoodeh2020variants} for characterizing privacy guarantees of iterative algorithms in terms of other variants of DP. 

\subsection{Contraction Coefficient}

All $f$-divergences satisfy data processing inequality, i.e., $D_f(P\sK\|Q\sK)\leq D_f(P\|Q)$ for any pair of probability distributions $(P,Q)$ and Markov kernel $\sK$ \cite{Csiszar67}. However, in many cases, this inequality is strict. The \textit{contraction coefficient} of Markov kernel $\sK$ under $D_f$-divergence $\eta_f(\sK)$ is the smallest number $\eta\leq 1$ such that $D_f(P\sK\|Q\sK)\leq \eta D_f(P\|Q)$ for any pair of probability distributions $(P,Q)$. Formally, $\eta_f(\sK)$ is defined as 
\begin{equation}\label{Eq:SDPI_f_Div}
    \eta_f(\sK)\coloneqq \sup_{\substack{P,Q\in \P(\X):\\ D_f(P\|Q)\neq 0}}\frac{D_f(P \sK\|Q\sK)}{D_f(P\|Q)}.
\end{equation}
Contraction coefficients have been  studied for several $f$-divergences, e.g., 
$\eta_\tv$ for total variation distance was studied in \cite{Dobrushin, Del_Moral_Contraction, COHEN_Paper}, $\eta_\kl$ for $\mathsf{KL}$-divergence  was studied in \cite{Anantharam_SDPI, Polyankiy_SDPI_Networks, Yury_Dissipation, Flavio_Polyankiy, Makur_SDPIjournal, Raginsky_SDPI}, and $\eta_{\chi^2}$ for $\chi^2$-divergence was studied in \cite{COHEN_Paper,Raginsky_SDPI, Witsenhausen_MC}.
In particular,  
Dobrushin \cite{Dobrushin} showed that $\eta_{\mathsf{TV}}$ 
has a remarkably simple two-point characterization  $\eta_\mathsf{TV}(\sK) = \sup_{x_1,x_2\in\X} \mathsf{TV}(\sK(\cdot|x_1), \sK(\cdot|x_2))$.


Similarly, one can plug $\sE_\gamma$-divergence into \eqref{Eq:SDPI_f_Div} and define the contraction coefficient $\eta_\gamma(\sK)$ for a Markov kernel $\sK$ under $\sE_\gamma$-divergence. 
This contraction coefficient has recently been studied in \cite{asoodeh2020online} for deriving approximate DP guarantees for online algorithms. In particular, it was shown \cite[Theorem 3]{asoodeh2020online} that $\eta_\gamma$ enjoys a simple two-point characterization, i.e.,   $\eta_\gamma(\sK) = \sup_{x_1,x_2\in\X} \sE_\gamma(\sK(\cdot|x_1)\| \sK(\cdot|x_2))$. Since $\sE_1(P\|Q) = \tv(P, Q)$, this is a natural extension of Dobrushin's result.

\subsection{Local Differential Privacy}
Suppose $\sK$ is a randomized mechanism  mapping each $x\in \X$ to a distribution $\sK(\cdot|x)\in \P(\Z)$.
One could view $\sK$ as a Markov kernel (i.e., channel) $\sK:\X\to \P(Z)$.
\begin{definition}[\cite{evfimievski2003limiting,Shiva_subsampling}]\label{Def:LDP}
A mechanism $\sK:\X\to \P(\Z)$ is $(\eps, \delta)$-LDP for $\eps\geq 0$ and $\delta\in [0,1]$ if 
\eqn{}{\sup_{x, x'\in \X} \sup_{A\subset \Z}~\left[\sK(A|x)-e^\eps\sK(A|x')\right] \leq \delta.}
$\sK$ is said to be $\eps$-LDP if it is $(\eps, 0)$-LDP.
Let $\Q_{\eps,\delta}$ be the collection of all Markov kernels $\sK$ with the above property. When $\delta = 0$, we use $\Q_{\eps}$ to denote $\Q_{\eps,0}$. 
\end{definition}
\noindent \textbf{Interactivity in Privacy-Preserving Mechanisms:}
Suppose there are $n$ users, each in possession of a datapoint $X_i$, $i\in [n]\coloneqq \{1, \dots, n\}$. The users wish to apply a mechanism $\sK_i$ that generates  a privatized version of $X_i$, denoted by $Z_i$. We say that the collection of mechanisms $\{\sK_i\}$ is  \textit{non-interactive} if $\sK_i$ is entirely determined by $X_i$ and independent of $(X_j,Z_j)$ for $j\neq i$. When all users apply the same mechanism $\sK$,  we can view $Z^n \coloneqq (Z_1, \ldots, Z_n)$ as  independent applications of $\sK$ to each $X_i$. We denote this overall mechanism by $\sK^{\otimes n}$. 
If interactions between users are permitted, then  $\sK_i$ need not depend only on $X_i$. In this case, we denote the overall mechanism $\{\sK_i\}_{i=1}^n$ by $\sK^n$.  In particular, the \textit{sequentially interactive} \cite{Duchi_LDP_MinimaxRates} setting refers to the case when  the input of $\sK_i$  depends on both $X_i$ and the outputs $Z^{i-1}$ of the $(i-1)$ previous mechanisms.

\section{LDP As the Contraction of $\sE_\gamma$-Divergence}

We show next that the $(\eps, \delta)$-LDP constraint, with $\delta$ not necessarily equal to zero, is \textit{equivalent} to the contraction of $\sE_\gamma$-divergence.  
\begin{theorem}\label{thm:LDP_Contraction}
A mechanism $\sK$ is $(\eps, \delta)$-LDP if and only if  $\eta_{e^\eps}(\sK)\leq \delta$ or equivalently
\eq{\sK\in \Q_{\eps, \delta} ~\Longleftrightarrow ~\sE_{e^\eps}(P\sK\|Q\sK)\leq \delta \sE_{e^\eps}(P\|Q), \quad \forall P, Q.} 
\end{theorem}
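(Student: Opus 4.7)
The plan is to leverage two standard ingredients: the variational characterization of $\sE_\gamma$ as a supremum over measurable sets, and the two-point characterization of $\eta_\gamma$ recalled just before the theorem.

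First, I would record the identity
\[
\sE_\gamma(P\|Q) = \sup_{A} \bigl[P(A) - \gamma\, Q(A)\bigr],
\]
which comes directly from the definition: since $\sE_\gamma$ is the $f$-divergence generated by $f_\gamma(t)=(t-\gamma)_+$, one has $\sE_\gamma(P\|Q) = \E_Q\bigl[\bigl(\tfrac{dP}{dQ} - \gamma\bigr)_+\bigr] = \int (dP-\gamma\, dQ)_+$, and the latter is maximized over $A$ by the Neyman--Pearson-type set $A^\star = \{dP/dQ > \gamma\}$. This identity is essentially the content of \eqref{Def:E_Gamma} after rearrangement, so it is a short computation rather than a real obstacle.

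Second, I would rewrite the definition of $(\eps,\delta)$-LDP (Definition~\ref{Def:LDP}) in the language of $\sE_\gamma$. Applying the identity above with $\gamma=e^\eps$, the inner supremum in Definition~\ref{Def:LDP} is precisely $\sE_{e^\eps}(\sK(\cdot|x)\|\sK(\cdot|x'))$. Hence
\[
\sK \in \Q_{\eps,\delta} \;\Longleftrightarrow\; \sup_{x,x'\in\X} \sE_{e^\eps}\bigl(\sK(\cdot|x)\,\|\,\sK(\cdot|x')\bigr) \le \delta.
\]

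Third, I would invoke the two-point characterization of the contraction coefficient under $\sE_\gamma$ (the result from \cite[Theorem 3]{asoodeh2020online} recalled earlier), which asserts that
\[
\eta_{e^\eps}(\sK) = \sup_{x_1,x_2\in\X} \sE_{e^\eps}\bigl(\sK(\cdot|x_1)\,\|\,\sK(\cdot|x_2)\bigr).
\]
Chaining this with the previous equivalence yields $\sK\in\Q_{\eps,\delta} \Longleftrightarrow \eta_{e^\eps}(\sK)\le\delta$. Finally, unpacking the definition of $\eta_{e^\eps}$ gives the equivalent statement $\sE_{e^\eps}(P\sK\|Q\sK)\le \delta\, \sE_{e^\eps}(P\|Q)$ for all $P,Q$, where the case $\sE_{e^\eps}(P\|Q)=0$ is handled by the data processing inequality (both sides vanish).

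The work is almost entirely bookkeeping once the two-point characterization is in hand; the only step requiring a genuine computation is the variational identity for $\sE_\gamma$, and that is a one-line consequence of the definition. So I expect no serious obstacle, provided the two-point characterization of $\eta_\gamma$ is taken as given from the cited reference.
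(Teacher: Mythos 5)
Your proof is correct and follows essentially the same route as the paper: it rewrites Definition~\ref{Def:LDP} via the variational identity $\sE_\gamma(P\|Q)=\sup_A[P(A)-\gamma Q(A)]$ (valid since $\gamma=e^\eps\geq 1$, so the $(1-\gamma)_+$ correction vanishes) and then chains it with the two-point characterization $\eta_\gamma(\sK)=\sup_{x,x'}\sE_\gamma(\sK(\cdot|x)\|\sK(\cdot|x'))$ from \cite[Theorem 3]{asoodeh2020online}. The paper simply applies the two ingredients in the opposite order, so the arguments are the same in substance.
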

We note that
Duchi et al. \cite{Duchi_LDP_MinimaxRates} showed that if $\sK$ is $\eps$-LDP then $D_{\kl}(P\sK\|Q\sK)\leq 2 (e^\eps-1)^2\tv^2(P, Q)$. They then informally concluded from this result that $\eps$-LDP acts as a contraction on the space of probability measures. Theorem \ref{thm:LDP_Contraction} makes this observation precise.

According to Theorem \ref{thm:LDP_Contraction}, a mechanism $\sK$ is $\eps$-LDP if and only if $\sE_{e^{\eps}}(P\sK\|Q\sK) =0$ for any distributions $P$ and $Q$.  
An example of such Markov kernels is given next.
\begin{example}[(Randomized response mechanism)]\label{example:RRMechanism}
Let $\X=\Z=\{0,1\}$ and consider the mechanism given by the binary symmetric channel $\mathsf{BSC}(\omega_\eps)$ with $\omega_\eps\coloneqq \frac{1}{1+e^\eps}$. This is often called randomized response mechanism \cite{warner1965randomized}  and denoted by $\sK^{\eps}_{\mathsf{RR}}$. 
This simple mechanism is well-known to be $\eps$-LDP which can now be verified via Theorem~\ref{thm:LDP_Contraction}. Let $P = \sBer(p)$ and $Q=\sBer(q)$ with $p,q\in [0,1]$. Then $P\sK^\eps_{\mathsf{RR}} = \sBer(p*\omega_\eps)$ and $P\sK^\eps_{\mathsf{RR}} = \sBer(q*\omega_\eps)$ where $a*b \coloneqq a(1-b) + b(1-a)$. It is straightforward to verify that $|p*\omega_\eps-e^\eps q*\omega_\eps|+|1-p*\omega_\eps-e^\eps (1-q*\omega_\eps)| = 0.5(e^\eps-1)$ for any $p,q$, implying $\sE_{e^\eps}(P\sK^\eps_{\mathsf{RR}}\|Q\sK^\eps_{\mathsf{RR}}) = 0$.
When $|\X| = k\geq 2$, a simple generalization of this mechanism, called $k$-ary randomized response, has been reported in literature (see, e.g., \cite{kairouz16_LDPEstimation, kairouz2014extremal_JMLR}) and is defined by $\Z = \X$ and $\sK_{\mathsf{kRR}}(x|x) = \frac{e^\eps}{k-1+e^\eps}$ and $\sK_{\mathsf{kRR}}(z|x) = \frac{1}{k-1+e^\eps}$ for $z\neq x$. Again, it can be verified that for this mechanism we have $\sE_{e^\eps}(P\sK^\eps_{\mathsf{kRR}}\|Q\sK^\eps_{\mathsf{kRR}}) = 0 $, for all Bernoulli $P$ and $Q$. 
\end{example}


$\sE_\gamma$-divergence underlies all other $f$-divergences, in a sense that any arbitrary $f$-divergence can be represented by $\sE_\gamma$-divergence \cite[Corollary 3.7]{cohen1998comparisons}. Thus, an LDP constraint  implies that a Markov kernel contracts for \textit{all} $f$-divergences, in a similar spirit to   $\sE_\gamma$-contraction in Theorem~\ref{thm:LDP_Contraction}.  
\begin{lemma}\label{Lemma:UB_f_Div}
Let $\sK\in \Q_{\eps, \delta}$ and $\varphi(\eps, \delta)\coloneqq 1-(1-\delta)e^{-\eps}$. Then,  $\eta_f(\sK)\leq \varphi(\eps, \delta)$ or, equivalently, 
$$D_f(P\sK\|Q\sK)\leq D_f(P\|Q) \varphi(\eps, \delta) \qquad \forall P, Q\in \P(\X).$$
\end{lemma}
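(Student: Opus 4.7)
The plan is to combine the integral representation of $f$-divergences as nonnegative mixtures of $\sE_\gamma$-divergences (Corollary~3.7 of \cite{cohen1998comparisons}) with the two-point characterization $\eta_\gamma(\sK)=\sup_{x,x'\in\X}\sE_\gamma(\sK(\cdot|x)\|\sK(\cdot|x'))$ recalled in Section~\ref{Sec:Background}. Concretely, writing $D_f$ as a nonnegative mixture of $\sE_\gamma(P\|Q)$ and $\sE_\gamma(Q\|P)$ over $\gamma\ge 1$ with weights derived from $f''$, if one establishes that $\eta_\gamma(\sK)\le\varphi(\eps,\delta)$ uniformly for every $\gamma\ge 1$, the representation immediately yields $D_f(P\sK\|Q\sK)\le\varphi(\eps,\delta)D_f(P\|Q)$ by integrating this pointwise estimate against the mixing measure.

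I would then establish the uniform $\sE_\gamma$-contraction bound by splitting at $\gamma=e^\eps$. Using that $\gamma\mapsto\sE_\gamma(P\|Q)$ is nonincreasing on $[1,\infty)$, for $\gamma\ge e^\eps$ we have $\sE_\gamma(\sK(\cdot|x)\|\sK(\cdot|x'))\le\sE_{e^\eps}(\sK(\cdot|x)\|\sK(\cdot|x'))\le\delta$ by Theorem~\ref{thm:LDP_Contraction}, and a short algebra step using $\varphi(\eps,\delta)=\delta+(1-\delta)(1-e^{-\eps})$ verifies $\delta\le\varphi(\eps,\delta)$. For $\gamma\in[1,e^\eps]$, the same monotonicity gives $\sE_\gamma\le\sE_1=\tv$, so it is enough to show $\tv(\sK(\cdot|x),\sK(\cdot|x'))\le\varphi(\eps,\delta)$ for all $x,x'\in\X$.

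The main technical step is this TV bound, which uses both directions of the LDP constraint. Applying Definition~\ref{Def:LDP} to the complement event $A^c$ with the roles of $x$ and $x'$ swapped gives $1-\sK(A|x')\le e^\eps(1-\sK(A|x))+\delta$, which rearranges to
\begin{equation*}
\sK(A|x')-\sK(A|x)\le \varphi(\eps,\delta)-(1-e^{-\eps})\sK(A|x)\le\varphi(\eps,\delta),
\end{equation*}
since $\sK(A|x)\ge 0$ and $e^{-\eps}\le 1$. A symmetric argument controls the reverse difference, and taking the supremum over measurable $A$ delivers $\tv(\sK(\cdot|x),\sK(\cdot|x'))\le\varphi(\eps,\delta)$.

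The main obstacle is securing the uniform-in-$\gamma$ estimate with the single constant $\varphi(\eps,\delta)$: splitting at $\gamma=e^\eps$ is essential, since the regime $\gamma\ge e^\eps$ is controlled directly by Theorem~\ref{thm:LDP_Contraction}, while the regime $\gamma\in[1,e^\eps]$ must be handled through the $\sE_\gamma\le\tv$ monotonicity together with the direct LDP-to-TV calculation above. Once both regimes are bounded by $\varphi(\eps,\delta)$, integrating against the $f''$-mixture completes the proof.
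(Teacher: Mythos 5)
Your proof is correct and reaches the right bound, but it takes a longer route than the paper. The paper also reduces everything to showing $\eta_\tv(\sK)\leq\varphi(\eps,\delta)$, but it gets there via a clean auxiliary claim $1-\gamma\bigl(1-\tv(P,Q)\bigr)\leq\sE_\gamma(P\|Q)$, which (rearranged, specialized to the kernel via the two-point formulas for $\eta_\gamma$ and $\eta_\tv$, and instantiated at $\gamma=e^\eps$) gives $\eta_\tv(\sK)\leq 1-\frac{1-\eta_{e^\eps}(\sK)}{e^\eps}\leq\varphi(\eps,\delta)$; it then simply invokes the known fact $\eta_f(\sK)\leq\eta_\tv(\sK)$ for every convex $f$ (citing Cohen--Kemperman--Zbaganu and Raginsky) and stops. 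You instead re-derive this last fact from scratch through the Cohen et al.\ integral representation of $D_f$ as a nonnegative mixture of $\sE_\gamma$-divergences and then bound $\eta_\gamma(\sK)$ uniformly over $\gamma\geq 1$. That is more work than necessary, and your case split at $\gamma=e^\eps$ is actually redundant: since the two-point characterization makes $\eta_\gamma(\sK)$ nonincreasing in $\gamma\geq 1$, $\eta_\gamma(\sK)\leq\eta_1(\sK)=\eta_\tv(\sK)$ already covers every $\gamma$, so once you have $\eta_\tv(\sK)\leq\varphi(\eps,\delta)$ the split buys you nothing. Your direct TV estimate from the $(\eps,\delta)$-LDP inequality applied to $A$ and $A^{c}$ is correct (up to a harmless relabeling of $x$ and $x'$ in the displayed chain) and is algebraically the same manipulation that underlies the paper's Claim; what the paper buys by packaging it as a two-sided claim about $\sE_\gamma$ vs.\ $\tv$ is modularity (the inequality \eqref{Eta_TV_Eta_Gamma} is reused in the proof of Lemma~\ref{lemma:UB_Eta_Kn}), while your version is more elementary and self-contained but re-proves a cited black-box lemma along the way.
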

Notice that this lemma holds for any $f$-divergences and any  \textit{general} family of $(\eps, \delta)$-LDP mechanisms. However, it can be improved if one considers particular mechanisms or a certain $f$-divergence. For instance, it is known that $\eta_\kl(\mathsf{BSC}(\omega)) = (1-2\omega^2)$ \cite{ahlswede1976}. Thus, we have $\eta_\kl(\sK^\eps_{\mathsf{RR}}) = (\frac{e^\eps-1}{e^\eps+1})^2$ for the randomized response mechanism $\sK^\eps_{\mathsf{RR}}$ (cf. Example~\ref{example:RRMechanism}), while  Lemma~\ref{Lemma:UB_f_Div} implies that $\eta_\kl(\sK^\eps_{\mathsf{RR}})\leq 1-e^{-\eps}$. Unfortunately, $\eta_\kl$ is difficult to compute in closed form  for general Markov kernels, in which case   Lemma~\ref{Lemma:UB_f_Div} provides a useful alternative. 


Next, we  extend Lemma~\ref{Lemma:UB_f_Div} for the non-interactive  mechanism. 
Fix an $(\eps, \delta)$-LDP mechanism $\sK$ and consider the corresponding non-interactive mechanism $\sK^{\otimes n}$. To obtain upper bounds on $\eta_f(\sK^{\otimes n})$ directly through Lemma~\ref{Lemma:UB_f_Div}, we would first need to derive privacy parameters of $\sK^{\otimes n}$ in terms of $\eps$ and $\delta$ (e.g., by applying composition theorems).  Instead, we can use the tensorization properties of contraction coefficients (see, e.g., \cite{Raginsky_SDPI,Makur_SDPIjournal}) to relate $\eta_f(\sK^{\otimes n})$ to $\eta_f(\sK)$ and then apply Lemma~\ref{Lemma:UB_f_Div}, as described next. 

\begin{lemma}\label{lemma:UB_Eta_Kn}
Let $\sK\in \Q_{\eps, \delta}$ and $\varphi_n(\eps, \delta)\coloneqq 1-e^{-n\eps}(1-\delta)^n$. Then $\eta_f(\sK^{\otimes n})\leq \varphi_n(\eps, \delta)$ for  $n\geq 1.$
\end{lemma}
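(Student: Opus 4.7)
The plan is to combine Lemma~\ref{Lemma:UB_f_Div} with a standard tensorization inequality for $f$-divergence contraction coefficients. First, applying Lemma~\ref{Lemma:UB_f_Div} to the one-shot mechanism $\sK \in \Q_{\eps, \delta}$ yields $\eta_f(\sK) \leq \varphi(\eps, \delta) = 1 - (1-\delta)e^{-\eps}$, equivalently
\[
1 - \eta_f(\sK) \geq (1-\delta)\, e^{-\eps}.
\]

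Next, I invoke the submultiplicative tensorization of $f$-divergence contraction coefficients from the cited SDPI literature \cite{Raginsky_SDPI, Makur_SDPIjournal}, namely
\[
1 - \eta_f(\sK_1 \otimes \sK_2) \geq \bigl(1 - \eta_f(\sK_1)\bigr)\bigl(1 - \eta_f(\sK_2)\bigr),
\]
and apply it inductively $n$ times to obtain $1 - \eta_f(\sK^{\otimes n}) \geq (1 - \eta_f(\sK))^n$. Combining with the first-step bound gives
\[
1 - \eta_f(\sK^{\otimes n}) \geq \bigl((1-\delta)\, e^{-\eps}\bigr)^n = (1-\delta)^n e^{-n\eps},
\]
which rearranges to $\eta_f(\sK^{\otimes n}) \leq 1 - (1-\delta)^n e^{-n\eps} = \varphi_n(\eps, \delta)$, as required.

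The main obstacle is the tensorization inequality itself: for general $f$-divergences, submultiplicativity of the ``gap to unity'' $1 - \eta_f$ under tensor products is a nontrivial fact typically proved via a variational/dual representation of $D_f$ or via a coupling argument that reduces the product channel to its extremal input pairs. Since the statement is available off-the-shelf from the cited SDPI references, it is used as a black box and the rest of the argument is one line of arithmetic.

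An alternative route, which would avoid invoking $\eta_f$-tensorization, is to first use Theorem~\ref{thm:LDP_Contraction} together with a tensorization bound for $\sE_\gamma$-divergence to show that $\sK^{\otimes n} \in \Q_{n\eps,\, 1 - (1-\delta)^n}$, and then apply Lemma~\ref{Lemma:UB_f_Div} directly to $\sK^{\otimes n}$ with these composed LDP parameters. Since $\varphi(n\eps, 1-(1-\delta)^n) = 1 - (1-\delta)^n e^{-n\eps}$, this produces exactly the same bound $\varphi_n(\eps,\delta)$, providing an independent check.
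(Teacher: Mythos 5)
Your argument has a real gap at the tensorization step. You invoke the inequality
\[
1 - \eta_f(\sK_1 \otimes \sK_2) \;\geq\; \bigl(1 - \eta_f(\sK_1)\bigr)\bigl(1 - \eta_f(\sK_2)\bigr)
\]
for an \emph{arbitrary} $f$-divergence and treat it as an off-the-shelf fact from \cite{Raginsky_SDPI, Makur_SDPIjournal}. That is not what those references establish. The result actually available (and the one the paper cites, \cite[Corollary~9]{Polyankiy_SDPI_Networks}) is a tensorization bound in which the \emph{right-hand side} involves $\eta_\tv$, not $\eta_f$:
\[
\eta_\tv(\sK_1 \otimes \cdots \otimes \sK_n) \;\leq\; 1 - \prod_{i=1}^n \bigl(1 - \eta_\tv(\sK_i)\bigr).
\]
Since $\eta_f \leq \eta_\tv$ for every convex $f$ (Eq.~\eqref{Eq:UBEtaTV}, from Cohen et al.\ and Raginsky), this is genuinely \emph{weaker} than what you claim: $1 - \eta_f(\sK_i) \geq 1 - \eta_\tv(\sK_i)$, so $\bigl(1 - \eta_f(\sK_i)\bigr)^n \geq \bigl(1 - \eta_\tv(\sK_i)\bigr)^n$, and your asserted lower bound on $1 - \eta_f(\sK^{\otimes n})$ is strictly stronger than what the TV-tensorization delivers. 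You would need a separate argument to justify it, and I do not know of a general proof for arbitrary $f$.

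Fortunately, the gap is easy to close, and your arithmetic is otherwise correct. The paper's proof chains $\eta_f(\sK^{\otimes n}) \leq \eta_\tv(\sK^{\otimes n}) \leq 1 - \bigl(1 - \eta_\tv(\sK)\bigr)^n$, and then bounds $\eta_\tv(\sK) \leq \varphi(\eps,\delta)$ using Eq.~\eqref{Eta_TV_Eta_Gamma} (or equivalently Lemma~\ref{Lemma:UB_f_Div} specialized to $f = \tv$). Since $\varphi(\eps,\delta)$ is a uniform upper bound on $\eta_f(\sK)$ for all $f$ anyway, the final quantity $\varphi_n(\eps,\delta)$ is the same as in your derivation — you just need to route the argument through $\eta_\tv$ rather than through $\eta_f$ directly. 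Your alternative route via an $(n\eps, 1-(1-\delta)^n)$ composition theorem would also need separate justification (it is tighter than basic composition of approximate LDP and is not a standard statement), so the $\eta_\tv$ route is the cleanest way to finish.
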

Each of the next three sections provide a different application of the contraction characterization of LDP.
\section{Private Minimax Risk}\label{Sec:Applications}
Let $X^n = (X_1,\dots, X_n)$ be $n$ independent and identically distributed (i.i.d.)\ samples drawn from a distribution $P$ in a family $\P \subseteq \P(\X)$. Let also $\theta:\P\to \T$ be a parameter of a distribution that we wish to estimate. Each user has a sample $X_i$ and applies a privacy-preserving mechanism $\sK_i$ to obtain $Z_i$. Generally, we can assume that $\sK_i$ are sequentially interactive. Given the sequences $\{Z_i\}_{i=1}^n$, the goal is to  estimate $\theta(P)$ through an estimator  $\Psi:\Z^n\to \T$.  
The quality of such estimator is assessed by a semi-metric $\ell:\T\times\T\to \R_+$ and is used to define the minimax risk as: 
\eqn{}{\mathcal R_n(\P, \ell, \eps, \delta)\coloneqq \inf_{\sK^n\subset \Q_{\eps, \delta}}\inf_{\Psi}\sup_{P\in \P}\E[\ell(\Psi(Z^n), \theta(P))].}

The quantity $R_n(\P, \ell, \eps, \delta)$ uniformly characterizes the optimal rate of private statistical estimation over the family $\P$ using the best possible estimator and privacy-preserving mechanisms in $\Q_{\eps, \delta}$. 
In the absence of privacy constraints (i.e., $Z^n = X^n$), we denote the minimax risk by $\mathcal R_n(\P, \ell)$. 

The  first step in deriving information-theoretic lower bounds for minimax risk is to reduce the above estimation problem to a testing problem \cite{Tsybakov_Book, Barron_ITMinimax, Yu1997}. To do so, we need to construct an index set $\V$ with $|\V|<\infty$ and a family of distributions $\{P_v, v\in \V\}\subseteq \P$ such that $\ell(\theta(P_v), \theta(P_{v'}))\geq 2\tau$ for all $v \neq v'$ in $\V$ for some $\tau>0$. The canonical testing problem is then defined as follows: Nature chooses a random variable $V$ uniformly at random from $\V$, and then conditioned on $V= v$, the samples $X^n$ are drawn i.i.d.\ from $P_v$, denoted by $X^n\sim P^{\otimes n}_v$. Each $X_i$ is then fed to a mechanism $\sK_i$ to generate $Z_i$. It is well-known \cite{Yu1997, Barron_ITMinimax, Tsybakov_Book} that $\mathcal R_n(\P, \ell)\geq \tau \mathsf{P}_{\mathsf e}(V|X^n)$, where $\mathsf{P}_{\mathsf e}(V|X^n)$ denotes the probability of error in guessing $V$ given $X^n$. 
Replacing $X^n$ by its $(\eps, \delta)$-privatized samples $Z^n$ in this result, one can obtain a lower bound on $R_n(\P, \ell, \eps, \delta)$ in terms of $\mathsf{P}_{\mathsf e}(V|Z^n)$. Hence, the remaining challenge is to lower-bound  $\mathsf{P}_{\mathsf e}(V|Z^n)$ over the choice of mechanisms $\{\sK_i\}$.
There are numerous techniques for this objective depending on $\V$. We focus on two such approaches, namely Le Cam's and Fano's method, that bound $\mathsf{P}_{\mathsf e}(V|Z^n)$ in terms of total variation distance and mutual information and hence allow us to invoke Lemmas~\ref{Lemma:UB_f_Div} and \ref{lemma:UB_Eta_Kn}. 

\subsection{Locally Private Le Cam's Method} 
Le Cam's method is applicable when $V$ is a binary set and contains, say,  $P_0$ and $P_1$. In its simplest form, it relies on the  inequality (see \cite[Lemma 1]{Yu1997} or \cite[Theorem 2.2]{Tsybakov_Book}) $\mathsf{P}_{\mathsf{e}}(V|X^n) \geq \frac{1}{2} \left[1-\tv(P^{\otimes n}_0, P^{\otimes n}_1)\right]$. Thus, it yields the following lower bound for non-private minimax risk
\aln{
\mathcal R_n(\P, \ell)&\geq \frac{\tau}{2} \left[1-\tv(P^{\otimes n}_0, P^{\otimes n}_1)\right]\label{LeCam_TV}\\
& \geq \frac{\tau}{2}\left[1-\frac{1}{\sqrt{2}}\sqrt{nD_\kl(P_0\| P_1)}\right], \label{LeCam_KL}}
for any $P_0\neq P_1$ in $\P$, where the second inequality follows from Pinsker's inequality and chain rule of KL divergence. 
In the presence of privacy, the estimator $\Psi$ depends on $Z^n$ instead of $X^n$, which is generated by a sequentially interactive mechanism $\sK^n$. To write the private counterpart of \eqref{LeCam_TV}, we need to replace $P^{\otimes n}_0$ and $P^{\otimes n}_1$ with $P^{\otimes n}_0\sK^n$ and $P^{\otimes n}_1\sK^n$ the corresponding marginals of $Z^n$, respectively.
A lower bound for $\mathcal R_n(\P, \ell, \eps, \delta)$ is therefore obtained  by deriving an upper bound for 
$\tv(P_0^{\otimes n}\sK^n,P_1^{\otimes n}\sK^n)$ for all $\sK^n\subset \Q_{\eps, \delta}$. 
\begin{lemma}\label{Lem:LB_Minimax}
Let $P_0, P_1\in \P$ satisfy $\ell(\theta(P_0), \theta(P_1))\geq 2\tau$. Then we have 
$$\mathcal R_n(\P, \ell, \eps, \delta)\geq \frac{\tau}{2}\left[1-\frac{1}{\sqrt{2}}\sqrt{ n \varphi(\eps, \delta)D_\kl(P_0\| P_1)}\right].$$
\end{lemma}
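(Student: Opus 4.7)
The plan is to combine the standard two-point Le~Cam reduction with the $f$-divergence contraction from Lemma~\ref{Lemma:UB_f_Div}, applied term-by-term along the chain rule of KL divergence.

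First, I would set up the canonical binary testing problem: let $V$ be uniform on $\{0,1\}$, and, conditioned on $V=v$, draw $X^n\sim P_v^{\otimes n}$ and then generate $Z^n$ through the sequentially interactive mechanism $\sK^n\in\Q_{\eps,\delta}$. Writing $Q_v$ for the distribution of $Z^n$ under $V=v$, the standard reduction gives $\mathcal R_n(\P,\ell,\eps,\delta)\geq \tau\,\sP_{\mathsf e}(V|Z^n)$, and Le~Cam's inequality \cite[Lemma~1]{Yu1997} yields $\sP_{\mathsf e}(V|Z^n)\geq \tfrac12[1-\tv(Q_0,Q_1)]$. Pinsker's inequality then reduces the task to bounding $D_\kl(Q_0\|Q_1)$ from above.

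Second, I would bound $D_\kl(Q_0\|Q_1)$ via the chain rule. For sequentially interactive $\sK^n$, the conditional distribution of $Z_i$ given $Z^{i-1}=z^{i-1}$ under $Q_v$ is obtained by passing a sample $X_i\sim P_v$ through the Markov kernel $\sK_i(\cdot\mid\,\cdot\,,z^{i-1})$. Since $\sK^n\subset\Q_{\eps,\delta}$, this conditional kernel is itself $(\eps,\delta)$-LDP for every fixed history $z^{i-1}$. Applying Lemma~\ref{Lemma:UB_f_Div} with $f(t)=t\log t$ gives, for each $i$ and each $z^{i-1}$,
\[
D_\kl\!\bigl(P_0\,\sK_i(\cdot\mid\,\cdot\,,z^{i-1})\,\big\|\,P_1\,\sK_i(\cdot\mid\,\cdot\,,z^{i-1})\bigr)\;\leq\;\varphi(\eps,\delta)\,D_\kl(P_0\|P_1).
\]
Averaging over $z^{i-1}\sim Q_0$ and summing via the chain rule of KL divergence yields $D_\kl(Q_0\|Q_1)\leq n\,\varphi(\eps,\delta)\,D_\kl(P_0\|P_1)$.

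Finally, combining Pinsker's inequality with the preceding bound gives $\tv(Q_0,Q_1)\leq \sqrt{\tfrac12 n\,\varphi(\eps,\delta)\,D_\kl(P_0\|P_1)}$, and substituting into the Le~Cam bound produces exactly the claimed inequality. The only nontrivial point is the second step: one must be careful to apply Lemma~\ref{Lemma:UB_f_Div} conditionally on the interaction history $Z^{i-1}$ rather than naively invoking a tensorization statement such as Lemma~\ref{lemma:UB_Eta_Kn}, since the mechanism need not factorize as a product. Once the conditional LDP property of $\sK_i(\cdot\mid\,\cdot\,,z^{i-1})$ is observed, the rest is a direct concatenation of contraction, chain rule, Pinsker, and Le~Cam.
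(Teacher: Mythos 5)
Your proposal is correct and follows essentially the same route as the paper's proof: Le~Cam's two-point bound, Pinsker's inequality, the chain rule of KL divergence over the interaction rounds, and a per-step application of Lemma~\ref{Lemma:UB_f_Div}. Your explicit observation that $\sK_i(\cdot\mid\cdot,z^{i-1})$ is $(\eps,\delta)$-LDP for each fixed history $z^{i-1}$—so that the contraction lemma is applied conditionally rather than through tensorization—is precisely the point the paper treats implicitly when it writes $D_\kl(M^n_0\|M^n_1)=\sum_i D_\kl(P_0\sK_i\|P_1\sK_i)$, and spelling it out is a welcome clarification rather than a departure.
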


By comparing with the original non-private Le Cam's method \eqref{LeCam_KL}, we observe that the effect of $(\eps, \delta)$-LDP is to reduce the effective sample size from $n$ to $(1-e^{-\eps}(1-\delta))n$. Setting $\delta =0$, this result strengthens Duchi et al. \cite[Corollary 2]{Duchi_LDP_MinimaxRates}, where the effective sample size was shown to be $4\eps^2 n$ for sufficiently small $\eps$.

\begin{example}[(One-dimensional mean estimation)]
For some $k > 1$, we assume $\P$ is given by
$$\P = \P_k\coloneqq \{P\in \P(\X):~|\E_P[X]|\leq 1, \E_P[|X|^k]\leq 1\}.$$
The goal is to estimate $\theta(P) = \E_P[X]$ under $\ell = \ell_2^2$ the squared $\ell_2$ metric. This problem was first studied in  \cite[Propsition 1]{Duchi_LDP_MinimaxRates}  where it was shown $\mathcal R_n(\P_k, \ell_2^2, \eps, 0)\geq (n\eps^2)^{-(k-1)/k}$ \textit{only} for $\eps\leq 1$. 
Applying our framework to this example, we obtain a similar lower bound that holds for all $\eps\geq 0$ and $\delta\in [0,1]$. 
\begin{corollary}\label{corollary:LB_Pk}
For all $k> 1$, $\eps\geq 0$, and $\delta\in (0,1)$, we have
\begin{align}
\mathcal R_n(\P_k, \ell_2^2, \eps, \delta) &\gtrsim \min\Big\{1,\left[n\varphi^2(\eps, \delta)\right]^{-\frac{(k-1)}{k}}\Big\}.\label{LB_Example}
\end{align}
\end{corollary}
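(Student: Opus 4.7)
The plan is to invoke the locally private Le Cam inequality from Lemma~\ref{Lem:LB_Minimax} on a heavy-tailed two-point sub-family of $\P_k$. Following the standard recipe for moment-constrained problems, I would construct the pair $P_0 = (1-2q)\delta_0 + 2q\,\delta_a$ and $P_1 = (1-q)\delta_0 + q\,\delta_a$, with $q\in(0,1/2)$ a parameter to be optimized and $a = (2q)^{-1/k}$ chosen so that $\E_{P_0}[|X|^k]=1$ and the moment budget is saturated. Direct computation then gives $\tv(P_0,P_1)=q$, $D_\kl(P_0\|P_1)=\Theta(q)$, and a mean gap $|\E_{P_0}[X]-\E_{P_1}[X]| = qa = 2^{-1/k}\,q^{(k-1)/k}$, so the $\ell_2^2$ separation of the targets is $(qa)^2 \asymp q^{2(k-1)/k}$, which serves as $2\tau$ in Le Cam's inequality.

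Next, I would substitute these quantities into Lemma~\ref{Lem:LB_Minimax} to obtain a bound of the form
$$\mathcal R_n(\P_k,\ell_2^2,\epsilon,\delta) \gtrsim q^{2(k-1)/k}\Bigl[1 - c\sqrt{n\varphi(\epsilon,\delta)\,D_\kl(P_0\|P_1)}\Bigr]_+,$$
and then tune $q$ so that the bracket is bounded away from zero. To match the claimed rate $(n\varphi^2)^{-(k-1)/k}$, the optimization must rest on a Le Cam constraint of the form $n\varphi^2\,\tv^2(P_0,P_1) \lesssim 1$, equivalently $q \lesssim 1/(\sqrt{n}\,\varphi)$, at which point $q^{2(k-1)/k} \asymp (n\varphi^2)^{-(k-1)/k}$. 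The $\min\{1,\cdot\}$ factor then emerges in the complementary regime $n\varphi^2 \lesssim 1$, where $q$ cannot be pushed below $1/2$ and the trivial bound $\mathcal R_n = \Omega(1)$ coming from $|\E_P[X]|\le 1$ takes over.

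The main obstacle is producing this $\varphi^2$ dependence on privacy rather than the linear $\varphi$ factor that a direct use of Lemma~\ref{Lem:LB_Minimax} would yield (the latter gives only the weaker rate $(n\varphi)^{-2(k-1)/k}$). The extra factor of $\varphi$ in the denominator stems from a Duchi--Jordan--Wainwright-type inequality $D_\kl(P\sK\|Q\sK) \lesssim \varphi(\epsilon,\delta)^2\,\tv^2(P,Q)$ valid for every $\sK \in \Q_{\epsilon,\delta}$. I would derive this from the $\sE_\gamma$ characterization of $(\epsilon,\delta)$-LDP in Theorem~\ref{thm:LDP_Contraction} by combining the pointwise bound $dP\sK/dQ\sK \leq e^\epsilon$ (up to an event of $Q\sK$-mass at most $\delta$), the TV contraction from Lemma~\ref{Lemma:UB_f_Div}, and a second-order expansion of $\log$ applied to the log-likelihood ratio. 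Tensorizing this per-sample inequality via the KL chain rule and feeding it into the Pinsker step underlying Lemma~\ref{Lem:LB_Minimax} then delivers the required bound on $\tv(P_0^{\otimes n}\sK^n, P_1^{\otimes n}\sK^n)$ and closes the argument.
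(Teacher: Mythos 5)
Your overall plan — a two-point family in $\P_k$, Le Cam, then tune a scale parameter — is the right flavor, but both the choice of discrepancy and the key contraction inequality diverge from the paper, and the latter contains a genuine error.

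The paper does \emph{not} route through the KL form of Le Cam (Lemma~\ref{Lem:LB_Minimax}). It picks $P_0, P_1$ on $\{-\omega^{-1/k}, 0, \omega^{-1/k}\}$ that put all non-zero mass on \emph{disjoint} points, so $D_\kl(P_0\|P_1)=\infty$ and a Pinsker/KL argument is dead on arrival. Instead, it passes to the Hellinger distance: $\tv(M_0^n,M_1^n)\leq H(M_0^n, M_1^n)$, uses the \emph{exact} tensorization $1-\tfrac12 H^2(\prod_i P_i,\prod_i Q_i)=\prod_i(1-\tfrac12 H^2(P_i,Q_i))$, and applies Lemma~\ref{Lemma:UB_f_Div} to the (squared) Hellinger divergence — which is an $f$-divergence, so $\eta_{H^2}(\sK)\leq\varphi(\eps,\delta)$ comes for free — to get $H^2(P_0\sK_i,P_1\sK_i)\leq\varphi H^2(P_0,P_1)=2\omega\varphi$. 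No bespoke ``KL vs.\ TV$^2$'' inequality is needed. Your family $P_0=(1-2q)\delta_0+2q\delta_a$, $P_1=(1-q)\delta_0+q\delta_a$ is a perfectly valid member of $\P_k$ with the right separation, and it even has finite KL; the problem lies elsewhere.

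The step that breaks is the claimed Duchi--Jordan--Wainwright-type bound
\begin{equation*}
D_\kl(P\sK\|Q\sK)\;\lesssim\;\varphi(\eps,\delta)^2\,\tv^2(P,Q)\qquad\text{for all }\sK\in\Q_{\eps,\delta}.
\end{equation*}
This is \emph{false}, and it cannot be rescued by the argument sketch you give. The actual result of Duchi et al.\ (cited in the paper) is $D_\kl(P\sK\|Q\sK)\leq 2(e^\eps-1)^2\tv^2(P,Q)$, and the factor $(e^\eps-1)^2$ is not improvable to the bounded quantity $\varphi^2=(1-e^{-\eps}(1-\delta))^2$. Indeed, your inequality would imply $D_\kl(P\sK\|Q\sK)\lesssim\tv^2(P,Q)$ uniformly over $\eps$ (since $\varphi\leq 1$), i.e.\ a reverse Pinsker inequality, which fails. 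A concrete counterexample inside the paper's own Example~\ref{example:RRMechanism}: take $\sK=\sK^\eps_{\mathsf{RR}}$, $P=\delta_0$, $Q=\delta_1$. Then $\tv(P,Q)=1$, $\varphi^2\tv^2(P,Q)=(1-e^{-\eps})^2\leq 1$, while a direct computation gives $D_\kl(P\sK\|Q\sK)=\eps\,\frac{e^\eps-1}{e^\eps+1}\to\infty$ as $\eps\to\infty$. The ``second-order expansion of $\log$'' applied to the bounded log-likelihood ratio $\left|\log\frac{\textnormal{d}P\sK}{\textnormal{d}Q\sK}\right|\leq\eps$ yields a prefactor that scales like $\eps^2$ (or $(e^\eps-1)^2$ after sharpening), not like $\varphi^2$; $\varphi$ is bounded by one while $\eps$ is not, so the two cannot be interchanged. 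This is precisely the obstruction the paper sidesteps by working with Hellinger, where the generic $f$-divergence contraction $\eta_f(\sK)\leq\varphi(\eps,\delta)$ is applied \emph{after} the tensorization step rather than combined with a (false) reverse-Pinsker-type inequality.
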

It is worth instantiating this corollary for some special values of $k$. Consider first the usual setting of finite variance setting, i.e., $k=2$. In the non-private case, it is known that the sample mean has mean-squared error that scales as $1/n$. According to Corollary~\ref{corollary:LB_Pk}, this rate worsens to $1/\varphi(\eps, \delta)\sqrt{n}$ in the presence of $(\eps, \delta)$-LDP requirement. As $k\to \infty$, the moment condition $\E_p[|X|^k]\leq 1$ implies the boundedness of $X$. In this case, Corollary~\ref{corollary:LB_Pk} implies the more standard lower bound $(\varphi^2(\eps, \delta)n)^{-1}$.      
\end{example}

\subsection{Locally Private Fano's Method}
Le Cam's method involves a pair of distributions $(P_0,P_1)$ in $\P$. However, it is possible to derive a stronger bound considering a larger subset of $\P$ by applying Fano's inequality  (see, e.g.,  \cite{Yu1997}).  We follow this path to obtain a better minimax lower bound for the non-interactive setting. 

Consider the index set $\V = \{1, \dots, |\V|\}$.  The non-private Fano's method relies on the Fano's inequality to write a lower bound for $\mathsf P_{\mathsf e}(V|X^n)$ in terms of mutual information as
\eqn{Eq:Fano}{\mathcal R_n(\P, \ell)\geq \tau \left[1-\frac{I(X^n; V) + \log 2}{\log |\V|}\right].} To incorporate privacy into this result, we need to derive an upper bound for $I(Z^n ; V)$ over all choices of mechanisms $\{\sK_i\}$. Focusing on the non-interactive mechanisms, the following lemma exploits Lemma~\ref{lemma:UB_Eta_Kn} for such an upper bound.

\begin{lemma}\label{Lemma_MI}
Given $X^n$ and $V$ as described above, let $Z^n$ be constructed by applying $\sK^{\otimes n}$ on $X^n$. If $\sK$ is $(\eps, \delta)$-LDP, then we have 
\begin{align*}
    I(Z^n; V) &\leq \varphi_n(\eps, \delta)I(X^n; V)\\
    &\leq \frac{n\varphi_n(\eps, \delta)}{|\V|^2}\sum_{v,v'\in \V}D_\kl(P_v\|P_{v'})
\end{align*}
\end{lemma}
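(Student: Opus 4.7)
The plan is to peel off the two inequalities separately, with the first inequality being the substantive one that invokes Lemma~\ref{lemma:UB_Eta_Kn}, and the second being a standard convexity-plus-tensorization argument.

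First, I would rewrite mutual information in its ``KL-divergence-from-marginal'' form. Since $V$ is uniform on $\V$, we have
\[
I(Z^n;V) \;=\; \frac{1}{|\V|}\sum_{v\in \V} D_\kl\!\left(P_{Z^n\mid V=v}\,\big\|\,P_{Z^n}\right),
\]
with the identical identity for $I(X^n;V)$. The key observation is that because $V \to X^n \to Z^n$ forms a Markov chain with transition kernel $\sK^{\otimes n}$, we have $P_{Z^n\mid V=v} = P_v^{\otimes n}\sK^{\otimes n}$ and $P_{Z^n} = P_{X^n}\sK^{\otimes n}$. Applying Lemma~\ref{lemma:UB_Eta_Kn} with $f(t)=t\log t$ to each term in the sum yields
\[
D_\kl\!\left(P_v^{\otimes n}\sK^{\otimes n}\,\big\|\,P_{X^n}\sK^{\otimes n}\right) \;\leq\; \varphi_n(\eps,\delta)\, D_\kl\!\left(P_v^{\otimes n}\,\big\|\,P_{X^n}\right),
\]
and averaging over $v$ produces the first inequality $I(Z^n;V)\leq \varphi_n(\eps,\delta)\,I(X^n;V)$.

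For the second inequality, I would invoke convexity of $D_\kl(P\|\cdot)$ in its second argument. Because $P_{X^n} = \frac{1}{|\V|}\sum_{v'} P_{v'}^{\otimes n}$, convexity gives
\[
D_\kl\!\left(P_v^{\otimes n}\,\big\|\,P_{X^n}\right) \;\leq\; \frac{1}{|\V|}\sum_{v'\in \V} D_\kl\!\left(P_v^{\otimes n}\,\big\|\,P_{v'}^{\otimes n}\right).
\]
Then the tensorization (additivity) of KL divergence for product measures, $D_\kl(P_v^{\otimes n}\|P_{v'}^{\otimes n}) = n D_\kl(P_v\|P_{v'})$, combined with averaging over $v$, yields $I(X^n;V) \leq \frac{n}{|\V|^2}\sum_{v,v'}D_\kl(P_v\|P_{v'})$. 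Chaining this with the first inequality gives the stated bound.

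The only conceptual hurdle is the application of Lemma~\ref{lemma:UB_Eta_Kn} on each ``conditional-against-marginal'' KL term; once that is in hand, the remaining steps are textbook manipulations (convexity of KL in the second argument, and tensorization for product measures). I would briefly remark that the first step crucially uses the non-interactive structure, since Lemma~\ref{lemma:UB_Eta_Kn} controls $\eta_f$ only for the product kernel $\sK^{\otimes n}$; a sequentially interactive $\sK^n$ would require a different tensorization bound and hence does not fit into this argument.
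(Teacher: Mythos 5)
Your proof is correct, and it differs from the paper's in one notable respect.

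For the first inequality, the paper invokes the characterization
\[
\eta_\kl(P_{B|A}) \;=\; \sup_{\substack{P_{AU}:\ U\markov A\markov B}}\frac{I(U;B)}{I(U;A)},
\]
citing Csisz\'ar--K\"orner \cite[Problem 15.12]{csiszarbook}, and then sets $A=X^n$, $B=Z^n$, $U=V$. You instead prove the inequality directly: decompose $I(Z^n;V)=\frac{1}{|\V|}\sum_v D_\kl(P_{Z^n\mid V=v}\|P_{Z^n})$, identify $P_{Z^n\mid V=v}=P_v^{\otimes n}\sK^{\otimes n}$ and $P_{Z^n}=P_{X^n}\sK^{\otimes n}$ via the Markov chain, apply the contraction bound to each term, and average. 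This is in effect a self-contained proof of the one direction of the Csisz\'ar--K\"orner identity that is actually needed, so your route is more elementary and does not rely on the full (harder) equality. Both are sound; the paper's is shorter at the cost of appealing to a stronger external fact.

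For the second inequality, your argument (convexity of $D_\kl$ in the second argument applied to $P_{X^n}=\frac{1}{|\V|}\sum_{v'}P_{v'}^{\otimes n}$, followed by tensorization $D_\kl(P_v^{\otimes n}\|P_{v'}^{\otimes n})=nD_\kl(P_v\|P_{v'})$) is the cleaner of the two routes. The paper's intermediate step writes $I(X^n;V)$ as $\frac{n}{|\V|}\sum_v D_\kl(P_v\|\bar P)$ with an equals sign, but this quantity is $n\,I(X;V)\ge I(X^n;V)$ (the marginal $P_{X^n}$ is a mixture of products, not a product), so it is really a one-sided bound; your version sidesteps this issue entirely. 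Your closing remark on why the argument is confined to the non-interactive setting is also accurate: Lemma~\ref{lemma:UB_Eta_Kn} bounds $\eta_f$ for the product kernel $\sK^{\otimes n}$, and sequential interactivity would not factor this way.
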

This lemma can be compared with \cite[Corollary 1]{Duchi_LDP_MinimaxRates}, where it was  shown 
\begin{align}
    I(Z^n; V)
    &\leq 2(e^\eps-1)\frac{n}{|\V|^2}\sum_{v, v'\in \V}D_\kl(P_v\| P_{v'}).\label{Duchi_MI}
\end{align}
This is a looser bound than Lemma~\ref{Lemma_MI} for any $n\geq 1$ and $\eps\geq 0.4$ and only holds for $\delta = 0$.

\begin{example}[(High-dimensional mean estimation in an  $\ell_2$-ball)]For a parameter $r<\infty$, define 
\eqn{}{\P_{r}\coloneqq \{P\in \P(\mathsf{B}^d_2(r))\},}
where $\mathsf{B}^d_2(r)\coloneqq \{x\in \R^d:~\|x\|_2\leq r\}$ is the $\ell_2$-ball of radius $r$ in $\R^d$. The goal is to estimate the 
mean $\theta(P) = \E[X]$ given the private views $Z^n$.  This example was first studied in  \cite[Proposition 3]{Duchi_LDP_MinimaxRates} that states $\mathcal R_n(\P, \ell_2^2, \eps, 0)\gtrsim   r^2\min\left\{\frac{1}{\eps\sqrt{n}}, \frac{d}{n\eps^2}\right\}$ for $\eps\in (0,1)$. In the following, we use Lemma~\ref{Lemma_MI} to derive a similar lower bound for any $\eps\geq 0$ and $\delta\in (0,1)$, albeit slightly weaker than   \cite[Proposition 3]{Duchi_LDP_MinimaxRates}.

\begin{corollary}\label{corollary:Minimax_highD}
For the non-interactive setting, we have  
\eqn{}{\mathcal R_n(\P, \ell_2^2, \eps, \delta)\gtrsim   r^2\min\left\{\frac{1}{n\varphi_n(\eps, \delta)}, \frac{d}{n^2\varphi^2_n(\eps, \delta)}\right\}.} 
\end{corollary}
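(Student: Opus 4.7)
The plan is to establish two separate lower bounds on $\mathcal{R}_n(\P, \ell_2^2, \eps, \delta)$---one via Le Cam's method yielding $r^2/(n\varphi_n(\eps,\delta))$, and one via Fano's method yielding $r^2 d/(n^2\varphi_n^2(\eps,\delta))$---and then observe that the minimum of two individually valid lower bounds is itself valid. This mirrors the structure of the Duchi et al.\ result cited in the statement, but uses the contraction factor $\varphi_n$ in place of $e^\eps - 1$, which is the essential payoff of the contraction framework developed in Section~\ref{Sec:Background}.

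For the Le Cam term, I would reduce to a one-dimensional binary test embedded in $\mathbb{R}^d$. Take $P_0, P_1 \in \P$ supported on $\{-re_1, +re_1\} \subset \mathsf{B}^d_2(r)$ with means $\mp \alpha r e_1$, so that $\|\theta(P_0) - \theta(P_1)\|_2^2 = 4\alpha^2 r^2$ and $D_\kl(P_0 \| P_1) \asymp \alpha^2$ for small $\alpha$. Since the setting is non-interactive, Lemma~\ref{lemma:UB_Eta_Kn} applied to $\sK^{\otimes n}$ gives $D_\kl(P_0^{\otimes n}\sK^{\otimes n} \| P_1^{\otimes n}\sK^{\otimes n}) \leq n\varphi_n(\eps,\delta) D_\kl(P_0 \| P_1)$. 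Adapting the template of Lemma~\ref{Lem:LB_Minimax} with $\varphi_n$ in place of $\varphi$, and optimizing $\alpha \in (0,1]$ to keep the Pinsker deficit below a constant, yields the claimed $r^2/(n\varphi_n)$ bound.

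For the Fano term, I would use a Gilbert--Varshamov packing $\V \subseteq \{-1,1\}^d$ with $|\V| \geq e^{cd}$ and pairwise Hamming distance at least $d/4$. For each $v \in \V$, define $P_v$ as the product distribution on $\{\pm r/\sqrt{d}\}^d$ with $\Pr_{P_v}[X_j = r v_j/\sqrt{d}] = (1+\beta)/2$; then $P_v \in \P$, $\theta(P_v) = \beta r v/\sqrt{d}$, and $\|\theta(P_v) - \theta(P_{v'})\|_2^2 \gtrsim \beta^2 r^2$ by the packing property. A coordinatewise KL computation gives $D_\kl(P_v \| P_{v'}) \asymp d_H(v,v') \beta^2$, so Lemma~\ref{Lemma_MI} produces $I(Z^n; V) \lesssim n\varphi_n \beta^2 d$. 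Substituting into \eqref{Eq:Fano} with $\log |\V| \gtrsim d$ and choosing $\beta \in (0,1]$ so that the Fano bracket is bounded below by a constant delivers the second bound.

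The main obstacle I anticipate is extracting the sharp $d$-scaling in the Fano step: a direct balance between the separation $\tau \asymp \beta^2 r^2$ and the mutual-information budget $n\varphi_n \beta^2 d \lesssim d$ yields only the Le-Cam-type rate $r^2/(n\varphi_n)$, so obtaining $r^2 d/(n^2\varphi_n^2)$ requires careful treatment of the regime $d \lesssim n\varphi_n$, where $\beta$ saturates at its boundary value of $1$, together with tight bookkeeping of the constants arising from the Gilbert--Varshamov packing. The Le Cam component, by comparison, is a routine adaptation of Lemma~\ref{Lem:LB_Minimax} to the composite non-interactive mechanism $\sK^{\otimes n}$ via Lemma~\ref{lemma:UB_Eta_Kn}.
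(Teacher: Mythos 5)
Your plan splits the bound into two pieces---Le Cam for $r^2/(n\varphi_n)$ and Fano for $r^2 d/(n^2\varphi_n^2)$---and then takes the minimum. That last move is logically sound but also weak: $\min\{A,B\}\leq A$, so one of the two bounds would already suffice, and the natural way to obtain a genuinely $d$-dependent bound of this $\min$ form is a \emph{single} Fano argument parametrized by a packing dimension $k\in[d]$. That is precisely what the paper does: it builds a $k$-dimensional packing for each $k$, derives a Fano bound of order $r^2\max_{k}\min\{1/k,\,k/(n^2\varphi_n^2)\}$, and then sets $k=\min\{n\varphi_n(\eps,\delta),d\}$. There is no Le Cam step at all.

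The more substantial gap is in your Fano construction itself. With a product distribution $P_v$ on $\{\pm r/\sqrt d\}^d$ and bias $\beta$, the $\ell_2^2$ separation is of order $\beta^2 r^2$ (dimension-free), while the coordinatewise KL sum---and hence the estimate of $I(Z^n;V)$ via Lemma~\ref{Lemma_MI}---scales as $n\varphi_n \beta^2 d$. Balancing against $\log|\V|\asymp d$ forces $\beta^2\lesssim 1/(n\varphi_n)$ and yields only $r^2/(n\varphi_n)$: the dimension cancels, as you observed. Your proposed escape, saturating $\beta$ at $1$, does not repair this: at $\beta=1$ each $P_v$ is a point mass, so $D_\kl(P_v\|P_{v'})=\infty$ for $v\neq v'$ and Lemma~\ref{Lemma_MI} gives a vacuous information bound. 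No tuning of Gilbert--Varshamov constants fixes a construction whose information budget and separation have the same $d$-dependence.

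The paper's construction has a different and essential structure. Instead of a product over coordinates, each sample reveals a \emph{single} random coordinate: draw an index $j$ uniformly from $[k]$ and set $X=\pm r\mathsf{e}_j$ with $\Pr[X=r\mathsf{e}_j\mid V=v]=(1+\omega v_j)/2$. Then $K=j$ is a function of $X$ and is independent of $V$, so $I(X;V)=I(X;V\mid K)\leq\log2-h_{\mathsf b}\!\left(\tfrac{1-\omega}{2}\right)\leq\omega\log2$, a bound that is \emph{independent of $k$}. Meanwhile $\theta(P_v)=\tfrac{r\omega}{k}v$, so the $\ell_2^2$ separation is of order $r^2\omega^2/k$. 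Here $k$ shrinks the separation without inflating the information, and optimizing $\omega\asymp k/(n\varphi_n)$ followed by $k$ produces the target $\min\{1/(n\varphi_n),d/(n^2\varphi_n^2)\}$. This asymmetry between information and separation as a function of $k$ is what you need and what a product construction cannot deliver, so the Fano half of your proposal would have to be replaced, not merely tightened.
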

\end{example}

\section{Private Bayesian Risk}
In the minimax setting, the worst-case parameter is considered which usually leads to over-pessimistic bounds. In practice, the parameter that incurs a worst-case risk may appear with very small probability. To capture  this prior knowledge, it is reasonable to assume that the true parameter is sampled from an underlying prior distribution. In this case, we are interested in the \textit{Bayes risk} of the problem.

Let $\P=\{P_{X|\Theta}(\cdot|\theta):\theta\in \T\}$ be a collection of parametric probability distributions on $\X$ and the parameter space $\T$ is endowed with a prior $P_{\Theta}$, i.e., $\Theta\sim P_\Theta$. Given an i.i.d.\ sequence $X^n$ drawn from $P_{X|\Theta}$, the goal is to estimate $\Theta$ from a privatized sequence $Z^n$ via an estimator $\Psi:\Z^n\to \T$.  Here, we focus on the non-interactive setting.
Define the private Bayes risk as 
\eqn{}{R_n^{\mathsf{Bayes}}(P_\Theta, \ell, \eps, \delta) \coloneqq \inf_{\sK\in \Q_{\eps, \delta}}\inf_{\Psi}\E[\ell(\Theta, \Psi(Z^n))],} 
where the expectation is taken with respect to the randomness of both $\Theta$ and $Z^n$. 
It is evident that $R_n^{\mathsf{Bayes}}(P_\Theta, \ell, \eps, \delta)$ must depend on the prior $P_\Theta$. This dependence can be quantified by
\eqn{}{\L(\zeta)\coloneqq \sup_{t\in \T}\Pr(\ell(\Theta, t)\leq \zeta),}
for $\zeta<\sup_{\theta, \theta'\in \T} \ell(\theta, \theta')$. 
Xu and Raginsky \cite{Raginsky_ISIT_converses} showed that the non-private Bayes risk (i.e., $X^n = Z^n$), denoted by $R_n^{\mathsf{Bayes}}(P_\Theta, \ell)$, is lower bounded as 
\eqn{Xu_Maxim}{R_n^{\mathsf{Bayes}}(P_\Theta, \ell)\geq \sup_{\zeta>0} \zeta\left[1-\frac{I(\Theta; X^n) + \log 2}{\log(1/\L(\zeta))}\right].}
Replacing $I(\Theta; X^n)$ with $I(\Theta; Z^n)$ in this result and applying Lemma~\ref{lemma:UB_Eta_Kn} (similar to Lemma~\ref{Lemma_MI}), we 
can directly convert \eqref{Xu_Maxim} to a lower bound for $R_n^{\mathsf{Bayes}}(P_\Theta, \ell, \eps, \delta)$.
\begin{corollary}\label{Eq:Private_Raginsky}
In the non-interactive setting, we have 
\eq{R_n^{\mathsf{Bayes}}(P_\Theta, \ell, \eps, \delta)\geq \sup_{\zeta>0} \zeta\left[1-\frac{\varphi_n(\eps, \delta)I(\Theta; X^n) + \log 2}{\log(1/\L(\zeta))}\right].}
\end{corollary}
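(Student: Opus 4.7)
The plan is to apply the Xu--Raginsky inequality \eqref{Xu_Maxim} with the privatized observations $Z^n$ in place of $X^n$, and then to upper bound the resulting mutual information $I(\Theta;Z^n)$ uniformly in $\sK\in\Q_{\eps,\delta}$ using the KL-contraction half of Lemma~\ref{lemma:UB_Eta_Kn}. The structure mirrors the derivation of Lemma~\ref{Lemma_MI} from Lemma~\ref{lemma:UB_Eta_Kn}, and the result is essentially a plug-and-play combination of the two tools.

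First, I would recall that the Xu--Raginsky bound is a statement about estimating a parameter from an arbitrary observation $Y$: for any estimator $\Psi$, $\E[\ell(\Theta,\Psi(Y))]\geq \sup_{\zeta>0}\zeta\bigl[1-(I(\Theta;Y)+\log 2)/\log(1/\L(\zeta))\bigr]$. Specializing $Y=Z^n$, taking the infimum over $\Psi$ and then over $\sK\in\Q_{\eps,\delta}$, and using that the right-hand side is decreasing in $I(\Theta;Y)$, I obtain
\[
R_n^{\mathsf{Bayes}}(P_\Theta,\ell,\eps,\delta)\geq \sup_{\zeta>0}\zeta\Bigl[1-\frac{\sup_{\sK\in\Q_{\eps,\delta}}I(\Theta;Z^n)+\log 2}{\log(1/\L(\zeta))}\Bigr].
\]

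Next, I would bound $I(\Theta;Z^n)$ uniformly over $\sK$. Writing the mutual information as an average KL divergence, $I(\Theta;Z^n)=\E_\Theta[D_\kl(P_{Z^n\mid\Theta}\|P_{Z^n})]$, and observing that in the non-interactive setting $P_{Z^n\mid\Theta=\theta}=P_{X^n\mid\Theta=\theta}\sK^{\otimes n}$ and $P_{Z^n}=P_{X^n}\sK^{\otimes n}$, Lemma~\ref{lemma:UB_Eta_Kn} applied to $D_\kl$ yields
\[
D_\kl(P_{Z^n\mid\Theta=\theta}\|P_{Z^n})\leq \varphi_n(\eps,\delta)\,D_\kl(P_{X^n\mid\Theta=\theta}\|P_{X^n})
\]
for every $\theta$. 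Averaging over $\Theta\sim P_\Theta$ gives $I(\Theta;Z^n)\leq \varphi_n(\eps,\delta)I(\Theta;X^n)$, which is independent of the particular mechanism $\sK\in\Q_{\eps,\delta}$. Substituting this bound into the displayed inequality above and again using the monotonicity of the Xu--Raginsky expression in its mutual information argument gives the claimed inequality.

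There is no substantive obstacle here; the only point that requires a moment of care is ensuring that the bound on $I(\Theta;Z^n)$ holds uniformly in $\sK$ so that it can be pulled inside the $\inf_{\sK}$, which is immediate from the SDPI since the contraction constant $\varphi_n(\eps,\delta)$ depends only on the privacy parameters and not on the mechanism itself.
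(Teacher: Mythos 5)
Your proof is correct and takes the route the paper indicates: apply the Xu--Raginsky bound \eqref{Xu_Maxim} with $Z^n$ in place of $X^n$, then dominate $I(\Theta;Z^n)$ by $\varphi_n(\eps,\delta)\,I(\Theta;X^n)$ via the $\eta_\kl$ bound from Lemma~\ref{lemma:UB_Eta_Kn}, which is exactly the ``similar to Lemma~\ref{Lemma_MI}'' step the paper sketches. The only cosmetic difference is that you obtain the mutual-information contraction directly by writing $I(\Theta;Z^n)$ as an average of KL divergences $D_\kl(P_{Z^n\mid\Theta=\theta}\|P_{Z^n})$ and applying the contraction pointwise in $\theta$, whereas Lemma~\ref{Lemma_MI} invokes the $\eta_\kl$--mutual-information characterization \eqref{Eta_KL_MI}; your argument is just the elementary inequality direction of that identity, so the two are equivalent.
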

In the following theorem, we provide a lower bound for $R_n^{\mathsf{Bayes}}(P_\Theta, \ell, \eps, \delta)$ that directly involves $\sE_\gamma$-divergence, and thus leads to a tighter bounds than \eqref{Eq:Private_Raginsky}. For any pair of random variables $(A, B)\sim P_{AB}$ with marginals $P_A$ and $P_B$ and a constant $\gamma\geq 0$, we define their $E_\gamma$-information as 
$$I_\gamma(A; B)\coloneqq \sE_\gamma(P_{AB}\|P_AP_B).$$ 
\begin{theorem}\label{Thm:Bayesian_LB}
Let $\sK $ be an $(\eps, \delta)$-LDP mechanism. Then, for $n=1$ we have 
\eq{R_1^{\mathsf{Bayes}}(P_\Theta, \ell, \eps, \delta)\geq \sup_{\zeta>0}\zeta \left[1-\delta I_{e^\eps}(\Theta; X)-e^\eps\L(\zeta)\right],}
and for $n>1$ in non-interactive setting we have
\eq{R_n^{\mathsf{Bayes}}(P_\Theta, \ell, \eps, \delta)\geq \sup_{\zeta>0}\zeta \left[1- \varphi_n(\eps, \delta)I_{e^\eps}(\Theta; X^n)-e^\eps\L(\zeta)\right].}
\end{theorem}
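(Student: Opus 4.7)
The plan is to mimic the Xu--Raginsky style argument underlying \eqref{Xu_Maxim} but to swap the KL data-processing step for the $\sE_{e^\eps}$-contraction granted by Theorem~\ref{thm:LDP_Contraction} (when $n=1$) and by Lemma~\ref{lemma:UB_Eta_Kn} (when $n\geq 2$). Fix any $(\eps,\delta)$-LDP mechanism $\sK$, any estimator $\Psi:\Z^n\to\T$, and any $\zeta>0$. Let $A\coloneqq\{(\theta,z^n):\ell(\theta,\Psi(z^n))\leq\zeta\}$. From \eqref{Def:E_Gamma} one obtains the variational identity $\sE_\gamma(P\|Q)=\sup_E[P(E)-\gamma Q(E)]$; applying it with $P=P_{\Theta,Z^n}$, $Q=P_\Theta P_{Z^n}$, and $\gamma=e^\eps$ yields
\[
\Pr(\ell(\Theta,\Psi(Z^n))\leq\zeta)=P_{\Theta,Z^n}(A)\leq e^\eps\,(P_\Theta P_{Z^n})(A)+I_{e^\eps}(\Theta;Z^n).
\]
Since $(P_\Theta P_{Z^n})(A)=\E_{Z^n}[\Pr_\Theta(\ell(\Theta,\Psi(Z^n))\leq\zeta)]\leq\L(\zeta)$ by definition of $\L$, the only remaining task is to control $I_{e^\eps}(\Theta;Z^n)$ uniformly over $\sK\in\Q_{\eps,\delta}$.

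The key observation is that $P_{\Theta,Z^n}$ and $P_\Theta P_{Z^n}$ share the same $\Theta$-marginal, which lets me disintegrate the positive-part integrand pointwise in $\theta$:
\[
I_{e^\eps}(\Theta;Z^n)=\int_\T\sE_{e^\eps}\bigl(P_{Z^n\mid\Theta=\theta}\,\big\|\,P_{Z^n}\bigr)\,dP_\Theta(\theta),
\]
and analogously for $I_{e^\eps}(\Theta;X^n)$. For each fixed $\theta$, both $P_{Z^n\mid\Theta=\theta}=P_{X^n\mid\Theta=\theta}\sK^{\otimes n}$ and $P_{Z^n}=P_{X^n}\sK^{\otimes n}$ are outputs of the same kernel applied to two inputs. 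For $n=1$, Theorem~\ref{thm:LDP_Contraction} immediately gives $\sE_{e^\eps}(P_{Z\mid\theta}\|P_Z)\leq\delta\,\sE_{e^\eps}(P_{X\mid\theta}\|P_X)$, so integrating against $P_\Theta$ produces $I_{e^\eps}(\Theta;Z)\leq\delta\,I_{e^\eps}(\Theta;X)$. For $n\geq 2$ in the non-interactive regime, Lemma~\ref{lemma:UB_Eta_Kn} instead supplies $\eta_{e^\eps}(\sK^{\otimes n})\leq\varphi_n(\eps,\delta)$, so the same disintegration yields $I_{e^\eps}(\Theta;Z^n)\leq\varphi_n(\eps,\delta)\,I_{e^\eps}(\Theta;X^n)$.

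Assembling these pieces, for every $\zeta>0$, $\Psi$, and $\sK\in\Q_{\eps,\delta}$,
\[
\Pr(\ell(\Theta,\Psi(Z^n))\leq\zeta)\leq e^\eps\L(\zeta)+c_n\,I_{e^\eps}(\Theta;X^n),
\]
with $c_1=\delta$ and $c_n=\varphi_n(\eps,\delta)$ for $n\geq 2$. A Markov-style step $\E[\ell(\Theta,\Psi(Z^n))]\geq\zeta\,\Pr(\ell(\Theta,\Psi(Z^n))>\zeta)\geq\zeta[1-e^\eps\L(\zeta)-c_n I_{e^\eps}(\Theta;X^n)]$ then closes the argument: the right-hand side depends on neither $\Psi$ nor $\sK$, so taking the infimum over both and the supremum over $\zeta>0$ recovers exactly the two inequalities in the theorem.

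The main obstacle I expect is justifying the pointwise disintegration of $\sE_{e^\eps}$ when the two measures share a marginal. The justification is that $f_{e^\eps}(t)=(t-e^\eps)_+$ is non-negative, so the integrand defining $\sE_{e^\eps}$ via \eqref{Def:E_Gamma} is non-negative, and Tonelli lets the outer $dP_\Theta$ slide past the inner integral over $\Z^n$, reducing the joint $\sE_\gamma$ to a $P_\Theta$-average of conditional $\sE_\gamma$'s. Once this identity is in hand, the rest is routine bookkeeping combining the variational form of $\sE_\gamma$ with the contraction statements already proved in Theorem~\ref{thm:LDP_Contraction} and Lemma~\ref{lemma:UB_Eta_Kn}.
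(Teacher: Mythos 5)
Your proof is correct and follows essentially the same route as the paper's: bound the probability of the good event via the variational form of $\sE_{e^\eps}$, bound the product-measure probability by $\L(\zeta)$, disintegrate $I_{e^\eps}(\Theta;Z^n)$ over $\theta$, and contract each conditional $\sE_{e^\eps}$ using $\eta_{e^\eps}(\sK)\leq\delta$ (Theorem~\ref{thm:LDP_Contraction}) for $n=1$ and $\eta_{e^\eps}(\sK^{\otimes n})\leq\varphi_n(\eps,\delta)$ (Lemma~\ref{lemma:UB_Eta_Kn}) for $n\geq 2$. The only cosmetic difference is that the paper first passes to $\hat\Theta=\Psi(Z^n)$ and then invokes data processing to replace $I_{e^\eps}(\Theta;\hat\Theta)$ by $I_{e^\eps}(\Theta;Z^n)$, whereas you apply the variational bound directly on the $(\Theta,Z^n)$ space via the pullback event $A$, which makes the data-processing step implicit; both yield the same chain of inequalities.
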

We compare Theorem~\ref{Thm:Bayesian_LB} with  Corollary~\ref{Eq:Private_Raginsky} in the next example.
\begin{example}\label{Example_UniformTheta}
Suppose $\Theta$ is uniformly distributed on $[0,1]$,  $P_{X|\Theta=\theta}=\sBer(\theta)$, and $\ell(\theta, \theta')=|\theta-\theta'|$. As mentioned earlier, $\L(\zeta) \leq \min\{2\zeta, 1\}$. 
We can write for $\gamma = e^\eps$
\eqn{}{I_\gamma(\Theta; X^n)= \int_{0}^1 \sE_\gamma(P_{X^n|\theta}\|P_{X^n})\text{d}\theta.}
A straightforward calculation shows that $P_{X^n|\theta}(x^n) = \theta^{s(x^n)}(1-\theta)^{n-s(x^n)}$, for any $\theta\in [0,1]$, and $P_{X^n}(x^n) = \frac{s(x^n)!(n-s(x^n))!}{(n+1)!}$ where $s(x^n)$ is the number of 1's in $x^n$. 
Given these marginal and conditional distribution, one can obtain after algebraic manipulations
\eq{I_\gamma(\Theta; X^n)= \frac{1}{n+1}\sum_{s=0}^n\int_{0}^1 \left[\theta^{s}(1-\theta)^{n-s}\frac{(n+1)!}{s!(n-s)!} - \gamma \right]_+\text{d}\theta.}
Plugging this into Theorem~\ref{Thm:Bayesian_LB}, we arrive at a maximization problem that can be numerically solved.  Similarly, we compute $I(\Theta; X^n) = \int_{0}^1D_\kl(P_{X^n|\theta}\|P_{X^n})\text{d}\theta$ and plug it into Corollary~\ref{Eq:Private_Raginsky} and numerically solve the resulting optimization problem. In Fig.~\ref{fig:LDP_Uniform}, we compare these two lower bounds for $\delta = 10^{-4}$ and $n= 20$, indicating the advantage of Theorem~\ref{Thm:Bayesian_LB} for small $\eps$.
\begin{figure}
    \centering
    \includegraphics[scale=0.6]{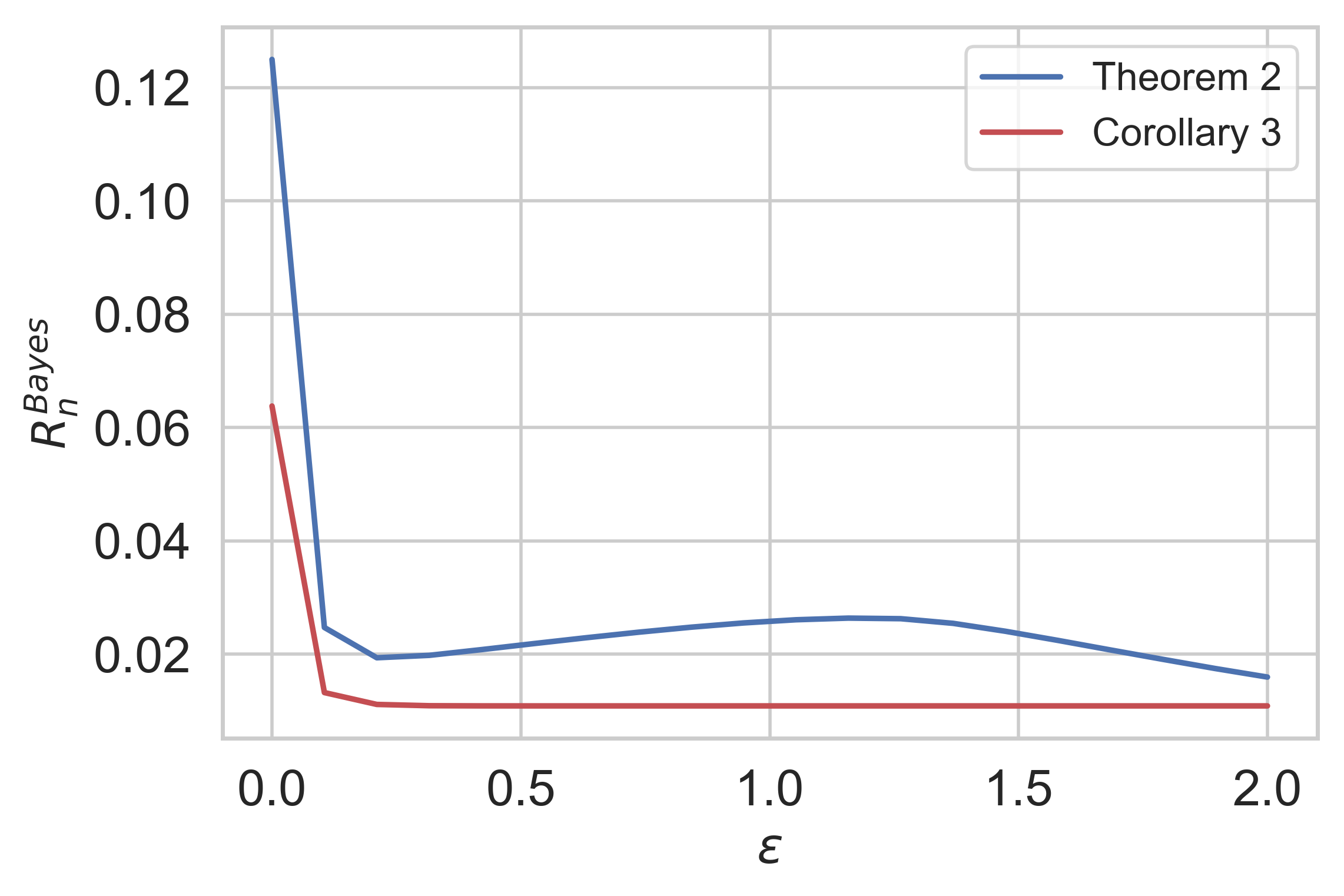}
    \caption{Comparison of the lower bounds obtained from Theorem~\ref{Thm:Bayesian_LB} and the private version of \cite[Theorem 1]{Raginsky_ISIT_converses} described in Corollary~\ref{Eq:Private_Raginsky} for Example~\ref{Example_UniformTheta} assuming $\delta = 10^{-4}$ and $n= 20$.}
    \label{fig:LDP_Uniform}
\end{figure}
\end{example}

\begin{remark}
The proof of Theorem~\ref{Thm:Bayesian_LB} leads to the following lower bound for the non-private Bayes risk
\eqn{Bayes_LB2}{R_n^{\mathsf{Bayes}}(P_\Theta, \ell)\geq \sup_{\substack{\zeta>0,\\ \gamma\geq0}}\zeta \left[1-I_{\gamma}(\Theta; X^n)-\gamma\L(\zeta)-(1-\gamma)_+\right].} For a comparison with \eqref{Xu_Maxim}, consider the following example.   
Suppose $\Theta$ is a uniform random variable on $[0,1]$ and $P_{X|\Theta = \theta}=\sBer(\theta)$. We are interested in the Bayes risk with respect to the $\ell_1$-loss function $\ell(\theta, \theta') =  |\theta-\theta'|$. 
It can be shown that $I(\Theta; X) = 0.19$ nats while 
\eqn{}{I_\gamma(\Theta; X) = \begin{cases} 0.25\gamma^2 & \text{if}~\gamma\in [0,1]\\
0.25(\gamma-2)^2 & \text{if}~\gamma\in [1,2]\\
0& \text{otherwise}.
\end{cases}
}
Moreover, $\L(\zeta) = \sup_{t\in [0,1]}\Pr(|\Theta-t|\leq \zeta)\leq \min\{2\zeta, 1\}$. It can be verified that \eqref{Xu_Maxim} gives $R_1^{\mathsf{Bayes}}(P_\Theta, \ell_1)\geq 0.03$, 
whereas our bound \eqref{Bayes_LB2} yields $R_1^{\mathsf{Bayes}}(P_\Theta, \ell_1)\geq 0.08$.
\end{remark}

\section{Private Hypothesis Testing}
We now turn our attention to the well-known problem of binary hypothesis testing under local differential privacy constraint. 
Suppose $n$ i.i.d. samples $X^n$ drawn from a distribution $Q\in \P(\X)$ are observed. Let now each $X_i$ be mapped to $Z_i$ via a mechanism $\sK_i\in \Q_{\eps, \delta}$ (i.e., sequential interaction is permitted). The goal is to distinguish between the null hypothesis $H_0: Q = P_0$ from the alternative $H_1: Q = P_1$ given $Z^n$.
Let $T$ be a binary statistic, generated from a randomized decision rule $P_{T|Z^n}:\Z^n \to \mathcal P(\{0,1\})$ where $1$ indicates that $H_0$ is rejected.
Type I and type II error probabilities corresponding to this statistic are given by $\Pr(T = 1|H_0)$ and $\Pr(T = 1|H_1)$, respectively. To capture the optimal trade-off between type I and type II error probabilities, it is customary to define 
$\beta^{\eps, \delta}_n(\alpha)\coloneqq \inf \Pr(T = 0|H_1)$ where the infimum is taken over all kernels $P_{T|Z^n}$ such that $\Pr(T = 1|H_0)\leq \alpha$ and non-interactive mechanisms $\sK^{\otimes n}$ with $\sK\in \Q_{\eps, \delta}$. In the following lemma, we apply Lemma~\ref{Lemma:UB_f_Div} to obtain an asymptotic lower bound for $\beta_n^{\eps, \delta}(\alpha)$.
  
\begin{corollary}\label{Cor:BHT}
We have for any $\eps\geq 0$ and $\delta\in [0,1]$
\eqn{}{
\lim_{n\to\infty}\frac{1}{n}\log\beta_n^{\eps, \delta}(\alpha)\geq -\varphi(\eps, \delta)D_\kl(P_0\|P_1).}
\end{corollary}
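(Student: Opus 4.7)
The proof strategy is to reduce the LDP hypothesis test to a non-private binary test between the post-mechanism i.i.d.\ product distributions, invoke the classical Stein--Chernoff lemma on that non-private test, and then substitute the KL contraction bound of Lemma~\ref{Lemma:UB_f_Div}.

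Concretely, fix $\sK\in\Q_{\eps,\delta}$ and write $\beta_n(\alpha;\sK)$ for the optimal type-II error when the mechanism is $\sK^{\otimes n}$. Since $Z^n$ is then i.i.d.\ with marginal $P_i\sK$ under hypothesis $H_i$, the LDP test is equivalent to an ordinary binary test between $(P_0\sK)^{\otimes n}$ and $(P_1\sK)^{\otimes n}$, so the Stein--Chernoff lemma yields
$$\lim_{n\to\infty}\frac{1}{n}\log\beta_n(\alpha;\sK) = -D_\kl(P_0\sK\|P_1\sK).$$
Applying Lemma~\ref{Lemma:UB_f_Div} with $f(t)=t\log t$ gives $D_\kl(P_0\sK\|P_1\sK)\leq \varphi(\eps,\delta)D_\kl(P_0\|P_1)$ uniformly over $\sK\in\Q_{\eps,\delta}$, so each per-mechanism Stein exponent is at least $-\varphi(\eps,\delta)D_\kl(P_0\|P_1)$. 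Since $\beta_n^{\eps,\delta}(\alpha)=\inf_{\sK\in\Q_{\eps,\delta}}\beta_n(\alpha;\sK)$, this lower bound transfers to $\beta_n^{\eps,\delta}(\alpha)$ and delivers the corollary.

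The main obstacle is justifying the interchange of $\inf_\sK$ and $\lim_n$, because the sublinear correction in Stein's lemma can in principle depend on $\sK$. I would address this by making the converse bound non-asymptotic via the data-processing inequality for binary KL: for any $\sK\in\Q_{\eps,\delta}$ and any test $P_{T|Z^n}$ with type-I error at most $\alpha$, chain rule plus DPI give
$$d(\alpha\|1-\beta_n(\alpha;\sK))\leq nD_\kl(P_0\sK\|P_1\sK)\leq n\varphi(\eps,\delta)D_\kl(P_0\|P_1),$$
where $d$ denotes the binary KL divergence. Because the right-hand side is free of $\sK$, inverting the inequality yields a uniform-in-$\sK$ lower bound on $\beta_n(\alpha;\sK)$, which persists after the infimum; sending $n\to\infty$ then recovers the claimed asymptotic exponential rate.
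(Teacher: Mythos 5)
Your strategy is the same as the paper's: reduce to the non‑private test between $(P_0\sK)^{\otimes n}$ and $(P_1\sK)^{\otimes n}$, invoke Chernoff--Stein, and then bound $D_\kl(P_0\sK\|P_1\sK)$ by $\varphi(\eps,\delta)D_\kl(P_0\|P_1)$ via Lemma~\ref{Lemma:UB_f_Div}. You also correctly flag an interchange issue that the paper's proof passes over silently: the paper asserts $\lim_n\tfrac{1}{n}\log\beta_n^{\eps,\delta}(\alpha) = -\sup_{\sK}D_\kl(P_0\sK\|P_1\sK)$ as though it followed immediately from Stein's lemma, but since $\beta_n^{\eps,\delta}(\alpha) = \inf_\sK\beta_n(\alpha;\sK)$, one actually needs Stein's exponent to hold uniformly over $\sK\in\Q_{\eps,\delta}$ before exchanging $\inf_\sK$ with $\lim_n$.

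The fix you propose, however, does not recover the stated exponent. Inverting the binary‑KL data‑processing bound $d(\alpha\|1-\beta_n(\alpha;\sK))\leq n\varphi(\eps,\delta)D_\kl(P_0\|P_1)$ gives
$$\log\beta_n(\alpha;\sK)\geq -\frac{n\varphi(\eps,\delta)D_\kl(P_0\|P_1)+h_{\mathsf b}(\alpha)}{1-\alpha},$$
which is indeed uniform in $\sK$ but only yields the weak converse
$\lim_n\tfrac{1}{n}\log\beta_n^{\eps,\delta}(\alpha)\geq -\tfrac{1}{1-\alpha}\varphi(\eps,\delta)D_\kl(P_0\|P_1)$, off by the factor $1/(1-\alpha)$. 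To match the corollary you need a strong converse that holds uniformly over $\sK$. For $\delta=0$ this is within reach: $\eps$-LDP forces $\bigl|\log\tfrac{\textnormal{d}(P_0\sK)}{\textnormal{d}(P_1\sK)}\bigr|\leq 2\eps$ pointwise, so the per‑sample log‑likelihood ratio of the induced test has variance at most $4\eps^2$ for every $\sK\in\Q_\eps$, and a Chebyshev or Berry--Esseen refinement of Stein's converse then gives the exponent uniformly in $\sK$. For $\delta>0$ this pointwise control on the likelihood ratio no longer holds, and the uniformity is genuinely more delicate; neither your argument nor the paper's proof addresses that case.
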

A similar result was proved by Kairouz et al. \cite[Sec.\ 3]{kairouz2014extremal_JMLR} that holds only for sufficiently ``small'' (albeit unspecified) $\eps$ and $\delta = 0$. When compared to Chernoff-Stein lemma \cite[Theorem 11.8.3]{cover2012elements}, establishing $D_\kl(P_0\|P_1)$ as the asymptotic exponential decay rate of $\beta_n^{\eps, \delta}(\alpha)$, the above corollary, once again, justifies the reduction of effective sample size from $n$ to $\varphi(\eps, \delta)n$ in the presence of $(\eps, \delta)$-LDP requirement.


\section{Mutual Information of LDP Mechanisms}
Viewing mutual information as a utility measure, we may consider maximizing mutual information under local differential privacy as yet another privacy-utility trade-off. To formalize this, let $X\sim P_X$. The goal is to characterize the supremum of $I(X; Z)$ over $\sK\in \Q_{\eps, \delta}$, i.e., the maximum information shared between $X$ and its $(\eps, \delta)$-LDP representation. Such mutual information bounds under local DP have appeared in the literature, e.g., McGregor et al. \cite{McGregor} provided a result that roughly states $I(X; Z)\leq 3\eps$ for $\sK\in \Q_{\eps}$ and Kairouz et al. \cite[Corollary 15]{kairouz2014extremal_JMLR} showed for sufficiently small $\eps$
\eqn{Kairouz_MI}{\sup_{\sK\in \Q_{\eps}}I(X; Z)\leq \frac{1}{2}P_X(A)(1-P_X(A))\eps^2,}
where $A\subset \X$ satisfies $A\in \argmin_{B\subset\X}|P_X(B)-\frac{1}{2}|$.
Next, we provide an upper bound for the mutual information under LDP that holds for \textit{all} $\eps\geq 0$ and $\delta\in [0,1]$.
\begin{corollary}\label{Cor:MI}
We have for any $\eps\geq 0$ and $\delta\in [0,1]$
\eqn{}{\sup_{\sK\in \Q_{\eps, \delta}} I(X; Z)\leq \varphi(\eps, \delta) H(X). }
\end{corollary}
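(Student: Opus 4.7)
The plan is to exploit the well-known decomposition of mutual information into an average of KL divergences and then apply Lemma~\ref{Lemma:UB_f_Div} pointwise, using Dirac measures as the first argument. Throughout I assume $X$ is discrete (as required for $H(X)$ to appear as an upper bound in the stated form).

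First I would write, for $Z$ obtained by passing $X\sim P_X$ through $\sK$,
\eq{I(X; Z) \;=\; \sum_{x\in\X} P_X(x)\, D_\kl\!\bigl(\sK(\cdot|x)\,\big\|\, P_Z\bigr),}
where $P_Z = P_X \sK$. Then I would note the trivial but key observation that $\sK(\cdot|x) = \delta_x \sK$, where $\delta_x$ is the point mass at $x$. Consequently, each term in the sum is of the form $D_\kl(\delta_x \sK \,\|\, P_X \sK)$, i.e., a KL divergence between the images under $\sK$ of two input distributions.

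Next I would invoke Lemma~\ref{Lemma:UB_f_Div} with $f(t) = t\log t$ (so $D_f = D_\kl$), with $P=\delta_x$ and $Q=P_X$, to get
\eq{D_\kl\!\bigl(\delta_x \sK \,\big\|\, P_X \sK\bigr) \;\leq\; \varphi(\eps,\delta)\, D_\kl(\delta_x \,\|\, P_X) \;=\; -\varphi(\eps,\delta)\log P_X(x),}
where the last equality uses the elementary identity $D_\kl(\delta_x \,\|\, P_X) = -\log P_X(x)$ (valid whenever $P_X(x)>0$; atoms with $P_X(x)=0$ contribute nothing to either side of the final bound). Substituting back and averaging over $x$ gives
\eq{I(X;Z) \;\leq\; \varphi(\eps,\delta)\sum_{x\in\X} P_X(x)\bigl(-\log P_X(x)\bigr) \;=\; \varphi(\eps,\delta)\, H(X),}
which holds uniformly for every $\sK\in\Q_{\eps,\delta}$, yielding the claim after taking the supremum.

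There is essentially no hard step here: the argument is a direct application of Lemma~\ref{Lemma:UB_f_Div} to a Dirac input, and the only subtlety is bookkeeping around atoms of measure zero, which is handled by the usual convention. An alternative route would be to invoke the Anantharam--Gohari--Kamath--Nair variational formula $\eta_\kl(\sK) = \sup_{U-X-Z} I(U;Z)/I(U;X)$ and specialize to $U=X$, but the direct computation above is shorter and avoids appealing to machinery beyond what is stated in the paper.
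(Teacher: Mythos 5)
Your proof is correct, and it takes a genuinely different route from the paper. The paper considers the trivial Markov chain $X\markov X\markov Z$ and invokes the variational characterization $\eta_\kl(\sK) = \sup_{U\markov X\markov Z} I(U;Z)/I(U;X)$ from \eqref{Eta_KL_MI} (cited from Csisz\'{a}r--K\"{o}rner) to get $I(X;Z)\leq \eta_\kl(\sK)H(X)$, then bounds $\eta_\kl(\sK)\leq\varphi(\eps,\delta)$ via Lemma~\ref{Lemma:UB_f_Div}. You instead decompose $I(X;Z)=\sum_x P_X(x)\,D_\kl(\delta_x\sK\|P_X\sK)$ and apply Lemma~\ref{Lemma:UB_f_Div} to each pair $(\delta_x, P_X)$, using $D_\kl(\delta_x\|P_X)=-\log P_X(x)$. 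In effect you are giving a direct, self-contained proof of the $U=X$ instance of the variational formula, so the two arguments are mathematically closely related; what yours buys is that it never needs the general variational characterization of $\eta_\kl$ (an equality, and a nontrivial result in its own right), only the one-directional contraction bound already established in Lemma~\ref{Lemma:UB_f_Div}. The paper's route is shorter once \eqref{Eta_KL_MI} is available, but yours keeps the dependency graph smaller. Your handling of atoms with $P_X(x)=0$ is the right bookkeeping, and the implicit restriction to discrete $X$ is forced by the appearance of $H(X)$ in the statement, so no generality is lost.
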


\small{
\bibliographystyle{IEEEtran}
\bibliography{reference}

\begin{thebibliography}{10}
\providecommand{\url}[1]{#1}
\csname url@samestyle\endcsname
\providecommand{\newblock}{\relax}
\providecommand{\bibinfo}[2]{#2}
\providecommand{\BIBentrySTDinterwordspacing}{\spaceskip=0pt\relax}
\providecommand{\BIBentryALTinterwordstretchfactor}{4}
\providecommand{\BIBentryALTinterwordspacing}{\spaceskip=\fontdimen2\font plus
\BIBentryALTinterwordstretchfactor\fontdimen3\font minus
  \fontdimen4\font\relax}
\providecommand{\BIBforeignlanguage}[2]{{%
\expandafter\ifx\csname l@#1\endcsname\relax
\typeout{** WARNING: IEEEtran.bst: No hyphenation pattern has been}%
\typeout{** loaded for the language `#1'. Using the pattern for}%
\typeout{** the default language instead.}%
\else
\language=\csname l@#1\endcsname
\fi
#2}}
\providecommand{\BIBdecl}{\relax}
\BIBdecl

\bibitem{Dwork_Calibration}
C.~Dwork, F.~McSherry, K.~Nissim, and A.~Smith, ``Calibrating noise to
  sensitivity in private data analysis,'' in \emph{Proc. Theory of Cryptography
  (TCC)}, Berlin, Heidelberg, 2006, pp. 265--284.

\bibitem{Dwork-OurData}
C.~Dwork, K.~Kenthapadi, F.~McSherry, I.~Mironov, and M.~Naor, ``Our data,
  ourselves: Privacy via distributed noise generation,'' in \emph{EUROCRYPT},
  S.~Vaudenay, Ed., 2006, pp. 486--503.

\bibitem{RenyiDP}
I.~Mironov, ``R\'enyi differential privacy,'' in \emph{Proc. Computer Security
  Found. (CSF)}, 2017, pp. 263--275.

\bibitem{ZeroDP}
M.~Bun and T.~Steinke, ``Concentrated differential privacy: Simplifications,
  extensions, and lower bounds,'' in \emph{Theory of Cryptography}, 2016, pp.
  635--658.

\bibitem{Concentrated_Dwork}
\BIBentryALTinterwordspacing
C.~Dwork and G.~N. Rothblum, ``Concentrated differential privacy,'' vol.
  abs/1603.01887, 2016. [Online]. Available:
  \url{http://arxiv.org/abs/1603.01887}
\BIBentrySTDinterwordspacing

\bibitem{GaussianDP}
J.~Dong, A.~Roth, and W.~J. Su, ``Gaussian differential privacy,'' \emph{arXiv
  1905.02383}, 2019.

\bibitem{asoodeh2020variants}
S.~Asoodeh, J.~Liao, F.~P. Calmon, O.~Kosut, and L.~Sankar, ``Three variants of
  differential privacy: Lossless conversion and applications,'' \emph{To appear
  in Journal on Selected Areas in Information Theory (JSAIT)}, 2021.

\bibitem{evfimievski2003limiting}
A.~Evfimievski, J.~Gehrke, and R.~Srikant, ``Limiting privacy breaches in
  privacy preserving data mining,'' in \emph{Proc. ACM symp. Principles of
  Database Systems (PODS)}.\hskip 1em plus 0.5em minus 0.4em\relax ACM, 2003,
  pp. 211--222.

\bibitem{Shiva_subsampling}
S.~P. Kasiviswanathan, H.~K. Lee, K.~Nissim, S.~Raskhodnikova, and A.~Smith,
  ``What can we learn privately?'' \emph{SIAM J. Comput.}, vol.~40, no.~3, pp.
  793--826, Jun. 2011.

\bibitem{Duchi_LDP_MinimaxRates}
\BIBentryALTinterwordspacing
J.~C. Duchi, M.~I. Jordan, and M.~J. Wainwright, ``Local privacy, data
  processing inequalities, and statistical minimax rates,'' in \emph{Proc.
  Symp. Foundations of Computer Science}, 2013, p. 429–438. [Online].
  Available: \url{https://arxiv.org/abs/1302.3203}
\BIBentrySTDinterwordspacing

\bibitem{LDP_gaboardi19a}
M.~Gaboardi, R.~Rogers, and O.~Sheffet, ``Locally private mean estimation:
  $z$-test and tight confidence intervals,'' in \emph{Proc. Machine Learning
  Research}, 2019, pp. 2545--2554.

\bibitem{Duchi_Federatedprotection}
A.~Bhowmick, J.~Duchi, J.~Freudiger, G.~Kapoor, and R.~Rogers, ``Protection
  against reconstruction and its applications in private federated learning,''
  \emph{arXiv 1812.00984}, 2018.

\bibitem{kairouz2014extremal_JMLR}
P.~Kairouz, S.~Oh, and P.~Viswanath, ``Extremal mechanisms for local
  differential privacy,'' \emph{Journal of Machine Learning Research}, vol.~17,
  no.~17, pp. 1--51, 2016.

\bibitem{LDP_Fisher}
L.~P. {Barnes}, W.~N. {Chen}, and A.~{Özgür}, ``Fisher information under
  local differential privacy,'' \emph{IEEE Journal on Selected Areas in
  Information Theory}, vol.~1, no.~3, pp. 645--659, 2020.

\bibitem{LDP_Acharya1}
J.~{Acharya}, C.~L. {Canonne}, and H.~{Tyagi}, ``Inference under information
  constraints i: Lower bounds from chi-square contraction,'' \emph{IEEE
  Transactions on Information Theory}, vol.~66, no.~12, pp. 7835--7855, 2020.

\bibitem{LDP_DistributionEstimation}
M.~{Ye} and A.~{Barg}, ``Optimal schemes for discrete distribution estimation
  under locally differential privacy,'' \emph{IEEE Trans. Inf. Theory},
  vol.~64, no.~8, pp. 5662--5676, 2018.

\bibitem{LDP_LinearRegression}
D.~{Wang} and J.~{Xu}, ``On sparse linear regression in the local differential
  privacy model,'' \emph{IEEE Trans. Inf. Theory}, pp. 1--1, 2020.

\bibitem{rohde2020}
A.~Rohde and L.~Steinberger, ``Geometrizing rates of convergence under local
  differential privacy constraints,'' \emph{Ann. Statist.}, vol.~48, no.~5, pp.
  2646--2670, 10 2020.

\bibitem{kairouz16_LDPEstimation}
P.~Kairouz, K.~Bonawitz, and D.~Ramage, ``Discrete distribution estimation
  under local privacy,'' in \emph{Proc. Int. Conf. Machine Learning}, vol.~48,
  20--22 Jun 2016, pp. 2436--2444.

\bibitem{Duchi_interactivity}
J.~Duchi and R.~Rogers, ``Lower bounds for locally private estimation via
  communication complexity,'' in \emph{Proc. Conference on Learning Theory},
  2019, pp. 1161--1191.

\bibitem{ahlswede1976}
R.~Ahlswede and P.~G\'acs, ``Spreading of sets in product spaces and
  hypercontraction of the markov operator,'' \emph{Ann. Probab.}, vol.~4,
  no.~6, pp. 925--939, 12 1976.

\bibitem{Yu1997}
B.~Yu, \emph{Assouad, Fano, and Le Cam}.\hskip 1em plus 0.5em minus 0.4em\relax
  Springer New York, 1997, pp. 423--435.

\bibitem{Tsybakov_Book}
A.~B. Tsybakov, \emph{Introduction to Nonparametric Estimation}, 1st~ed.\hskip
  1em plus 0.5em minus 0.4em\relax Springer Publishing Company, Incorporated,
  2008.

\bibitem{Csiszar67}
I.~Csisz{\'a}r, ``{Information-type measures of difference of probability
  distributions and indirect observations},'' \emph{Studia Sci. Math. Hungar.},
  vol.~2, pp. 299--318, 1967.

\bibitem{Ali1966AGC}
S.~M. Ali and S.~D. Silvey, ``A general class of coefficients of divergence of
  one distribution from another,'' \emph{Journal of Royal Statistics}, vol.~28,
  pp. 131--142, 1966.

\bibitem{hockey_stick}
\BIBentryALTinterwordspacing
N.~Sharma and N.~A. Warsi, ``Fundamental bound on the reliability of quantum
  information transmission,'' \emph{CoRR}, vol. abs/1302.5281, 2013. [Online].
  Available: \url{http://arxiv.org/abs/1302.5281}
\BIBentrySTDinterwordspacing

\bibitem{polyanskiy2010channel}
Y.~Polyanskiy, H.~V. Poor, and S.~Verd\'u, ``Channel coding rate in the finite
  blocklength regime,'' \emph{{IEEE} Trans. Inf. Theory}, vol.~56, no.~5, pp.
  2307--2359, 2010.

\bibitem{Balle:Subsampling}
B.~Balle, G.~Barthe, and M.~Gaboardi, ``Privacy amplification by subsampling:
  Tight analyses via couplings and divergences,'' in \emph{NeurIPS}, 2018, pp.
  6280--6290.

\bibitem{Balle2019mixing}
B.~Balle, G.~Barthe, M.~Gaboardi, and J.~Geumlek, ``Privacy amplification by
  mixing and diffusion mechanisms,'' in \emph{NeurIPS}, 2019, pp.
  13\,277--13\,287.

\bibitem{asoodeh2020online}
S.~Asoodeh, M.~Diaz, and F.~P. Calmon, ``Privacy analysis of online learning
  algorithms via contraction coefficients,'' \emph{arXiv 2012.11035}, 2020.

\bibitem{Dobrushin}
R.~L. Dobrushin, ``Central limit theorem for nonstationary markov chains.
  {I},'' \emph{Theory Probab. Appl.}, vol.~1, no.~1, pp. 65--80, 1956.

\bibitem{Del_Moral_Contraction}
P.~Del~Moral, M.~Ledoux, and L.~Miclo, ``On contraction properties of markov
  kernels,'' \emph{Probab. Theory Relat. Fields}, vol. 126, pp. 395--420, 2003.

\bibitem{COHEN_Paper}
J.~E. Cohen, Y.~Iwasa, G.~Rautu, M.~{Beth Ruskai}, E.~Seneta, and G.~Zbaganu,
  ``Relative entropy under mappings by stochastic matrices,'' \emph{Linear
  Algebra and its Applications}, vol. 179, pp. 211 -- 235, 1993.

\bibitem{Anantharam_SDPI}
V.~{Anantharam}, A.~{Gohari}, S.~{Kamath}, and C.~{Nair}, ``On
  hypercontractivity and a data processing inequality,'' in \emph{2014 IEEE
  Int. Symp. Inf. Theory}, 2014, pp. 3022--3026.

\bibitem{Polyankiy_SDPI_Networks}
Y.~Polyanskiy and Y.~Wu, ``Strong data-processing inequalities for channels and
  bayesian networks,'' in \emph{Convexity and Concentration}, E.~Carlen,
  M.~Madiman, and E.~M. Werner, Eds.\hskip 1em plus 0.5em minus 0.4em\relax New
  York, NY: Springer New York, 2017, pp. 211--249.

\bibitem{Yury_Dissipation}
Y.~{Polyanskiy} and Y.~{Wu}, ``Dissipation of information in channels with
  input constraints,'' \emph{IEEE Trans. Inf. Theory}, vol.~62, no.~1, pp.
  35--55, Jan 2016.

\bibitem{Flavio_Polyankiy}
F.~P. {Calmon}, Y.~{Polyanskiy}, and Y.~{Wu}, ``Strong data processing
  inequalities for input constrained additive noise channels,'' \emph{IEEE
  Trans. Inf. Theory}, vol.~64, no.~3, pp. 1879--1892, 2018.

\bibitem{Makur_SDPIjournal}
A.~Makur and L.~Zheng, ``Comparison of contraction coefficients for
  $f$-divergences,'' \emph{Probl. Inf. Trans.}, vol.~56, pp. 103--156, 2020.

\bibitem{Raginsky_SDPI}
M.~{Raginsky}, ``Strong data processing inequalities and$\phi $-sobolev
  inequalities for discrete channels,'' \emph{IEEE Trans. Inf. Theory},
  vol.~62, no.~6, pp. 3355--3389, June 2016.

\bibitem{Witsenhausen_MC}
H.~S. Witsenhausen, ``On sequences of pairs of dependent random variables,''
  \emph{SIAM Journal on Applied Mathematics}, vol.~28, no.~1, pp. 100--113,
  1975.

\bibitem{warner1965randomized}
S.~L. Warner, ``Randomized response: A survey technique for eliminating evasive
  answer bias,'' \emph{Journal of the American Statistical Association},
  vol.~60, no. 309, pp. 63--69, 1965.

\bibitem{cohen1998comparisons}
J.~Cohen, J.~Kemperman, and G.~Zb{\u{a}}ganu, \emph{Comparisons of Stochastic
  Matrices, with Applications in Information Theory, Economics, and Population
  Sciences}.\hskip 1em plus 0.5em minus 0.4em\relax Birkh{\"a}user, 1998.

\bibitem{Barron_ITMinimax}
Y.~Yang and A.~Barron, ``Information-theoretic determination of minimax rates
  of convergence,'' \emph{Ann. Statist.}, vol.~27, no.~5, pp. 1564--1599, 10
  1999.

\bibitem{Raginsky_ISIT_converses}
A.~{Xu} and M.~{Raginsky}, ``Converses for distributed estimation via strong
  data processing inequalities,'' in \emph{IEEE Int. Sympos. Inf. Theory
  (ISIT)}, 2015, pp. 2376--2380.

\bibitem{cover2012elements}
T.~M. Cover and J.~A. Thomas, \emph{Elements of information theory}.\hskip 1em
  plus 0.5em minus 0.4em\relax John Wiley \& Sons, 2012.

\bibitem{McGregor}
A.~McGregor, I.~Mironov, T.~Pitassi, O.~Reingold, K.~Talwar, and S.~Vadhan,
  ``The limits of two-party differential privacy,'' in \emph{Proc. of the 51st
  Annual {IEEE} Symposium on Foundations of Computer Science (FOCS
  {\textquoteleft}10)}, 23{\textendash}26 October 2010, p. 81{\textendash}90.

\bibitem{csiszarbook}
I.~Csisz\'{a}r and J.~K\"{o}rner, \emph{Information Theory: Coding Theorems for
  Discrete Memoryless Systems}.\hskip 1em plus 0.5em minus 0.4em\relax
  Cambridge University Press, 2011.

\end{thebibliography}
}
\normalsize
\appendix
We begin by some alternative expressions for $\sE_\gamma$-divergence that are useful for the subsequent proofs. 
It is straightforward to show that for any $\gamma\geq 0$, we have 
 \begin{align}
 	\sE_{\gamma}(P\|Q) &=\frac{1}{2}\int|\textnormal{d}P-\gamma \textnormal{d}Q| -\frac{1}{2}|1-\gamma| \label{eq:DefEgamma}\\
 	&= \sup_{A\subset \X} \left[P(A) - \gamma Q(A)\right]- (1-\gamma)_+\label{eq:DefEgamma2}\\
 	&=  P\big(\log\frac{\text{d}P}{\text{d}Q}>\log \gamma\big)-\gamma Q\big(\log\frac{\text{d}P}{\text{d}Q}>\log \gamma\big)\nonumber\\
 	&\quad - (1-\gamma)_+. \label{eq:EgammaPDif}
 \end{align}
 
The proof of Theorem~\ref{thm:LDP_Contraction} relies on the following theorem, recently proved by the authors in \cite[Theorem 3]{asoodeh2020online}. 

\begin{theorem}\label{Thm:Eta_Gama_formula}
For any $\gamma\geq 1$ and Markov kernel $\sK$ with input alphabet $\X$, we have
\eqn{pf_LDP1}{\eta_\gamma(\sK) = \sup_{x, x'\in \X} \sE_\gamma(\sK(\cdot|x)\|\sK(\cdot|x')).} 
\end{theorem}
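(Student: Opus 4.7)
The plan is to prove the two inclusions separately. The lower bound $\eta_\gamma(\sK) \geq \sup_{x,x'\in\X}\sE_\gamma(\sK(\cdot|x)\|\sK(\cdot|x'))$ is immediate: for any two distinct inputs $x, x' \in \X$, set $P = \delta_x$ and $Q = \delta_{x'}$, observe that \eqref{eq:DefEgamma} gives $\sE_\gamma(\delta_x\|\delta_{x'}) = 1$ for $\gamma \geq 1$, and note that $P\sK = \sK(\cdot|x)$ and $Q\sK = \sK(\cdot|x')$; plugging into \eqref{Eq:SDPI_f_Div} and taking the sup over $x, x'$ yields the bound.

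For the matching upper bound, set $\eta := \sup_{x,x'\in\X}\sE_\gamma(\sK(\cdot|x)\|\sK(\cdot|x'))$ and fix an arbitrary pair $P, Q \in \P(\X)$. Since $\gamma \geq 1$, the variational identity \eqref{eq:DefEgamma2} simplifies to $\sE_\gamma(P'\|Q') = \sup_A [P'(A) - \gamma Q'(A)]$. Let $A^* \subset \Z$ attain this supremum for the pair $(P\sK, Q\sK)$ (the likelihood-ratio superlevel set $\{dP\sK/dQ\sK > \gamma\}$ from \eqref{eq:EgammaPDif} works), and define $f^*(x) := \sK(A^*|x) \in [0,1]$. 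Applying the same variational formula to each pair $(\sK(\cdot|x), \sK(\cdot|x'))$ with the \emph{same} set $A^*$ gives the uniform estimate
\[ f^*(x) - \gamma f^*(x') \leq \sE_\gamma(\sK(\cdot|x)\|\sK(\cdot|x')) \leq \eta \qquad \forall x, x' \in \X, \]
so that $M := \sup_x f^*(x)$ and $m := \inf_x f^*(x)$ satisfy $M \leq \gamma m + \eta$.

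The next step invokes the Jordan decomposition $P - \gamma Q = \nu_+ - \nu_-$. The mass identity $\nu_+(\X) - \nu_-(\X) = 1 - \gamma$ together with \eqref{eq:DefEgamma} yields $\nu_+(\X) = \sE_\gamma(P\|Q) =: t$ and $\nu_-(\X) = t + \gamma - 1$. Using $m \leq f^* \leq M$ and integrating against $\nu_\pm$,
\[ \sE_\gamma(P\sK\|Q\sK) = \int f^*\, d\nu_+ - \int f^*\, d\nu_- \leq Mt - m(t + \gamma - 1). \]
A one-line rearrangement rewrites the right-hand side as $t(M - \gamma m) + m(\gamma-1)(t-1)$, which is at most $\eta\, t = \eta\, \sE_\gamma(P\|Q)$ because $M - \gamma m \leq \eta$, $m \geq 0$, $\gamma \geq 1$, and $t = \sE_\gamma(P\|Q) \leq \sE_1(P\|Q) = \tv(P,Q) \leq 1$. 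Since $P, Q$ were arbitrary, this yields $\eta_\gamma(\sK) \leq \eta$ and closes the theorem.

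The main obstacle is arithmetic rather than conceptual: the tempting one-line bound $\int f^*\, d(P - \gamma Q) \leq M - \gamma m \leq \eta$ only delivers $\sE_\gamma(P\sK\|Q\sK) \leq \eta$ and loses the contraction factor $\sE_\gamma(P\|Q)$. The fix is to exploit the fact that, unlike in Dobrushin's classical $\tv$ argument, the positive and negative parts of $P - \gamma Q$ carry \emph{unequal} mass (differing by $\gamma - 1$); one must therefore track $Mt$ and $m(t+\gamma-1)$ separately and then use $t \leq 1$ to absorb the residual $m(\gamma-1)(t-1) \leq 0$.
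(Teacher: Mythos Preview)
The paper does not actually prove Theorem~\ref{Thm:Eta_Gama_formula}; it is quoted from \cite[Theorem~3]{asoodeh2020online} and used as a black box in the proof of Theorem~\ref{thm:LDP_Contraction}. So there is no ``paper's own proof'' to compare against, and your proposal stands on its own.

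Your argument is correct. The lower bound via Dirac inputs is standard, and the upper bound is a clean extension of Dobrushin's coupling-free proof for $\eta_{\tv}$: fix the optimal output event $A^*$, push it back to the function $f^*(x)=\sK(A^*\mid x)$, and control $\int f^*\,d(P-\gamma Q)$ through the Jordan decomposition. The key observation you correctly isolate is that, unlike the $\gamma=1$ case, the positive and negative parts of $P-\gamma Q$ have masses $t$ and $t+\gamma-1$ rather than both equal to $t$, so the bound $Mt-m(t+\gamma-1)$ must be rewritten as $t(M-\gamma m)+m(\gamma-1)(t-1)$ before one can invoke $M-\gamma m\le\eta$. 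The sign of the residual term is handled by $m\ge 0$, $\gamma\ge 1$, and $t=\sE_\gamma(P\|Q)\le \sE_1(P\|Q)=\tv(P,Q)\le 1$; all of these are valid.

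Two minor points worth tightening in a final write-up: (i) the optimal set $A^*$ need not be attained in a general measurable space, but your parenthetical remark that the superlevel set from \eqref{eq:EgammaPDif} works is enough (and in any case one can run the argument with an $\varepsilon$-approximate $A^*$ and let $\varepsilon\downarrow 0$); (ii) the inequality $t(M-\gamma m)\le t\eta$ uses only $t\ge 0$ and $M-\gamma m\le \eta$, which is fine whether $M-\gamma m$ is positive or negative since $\eta\ge 0$.
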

Notice that for $\gamma = 1$, this theorem reduces to the well-known Dobrushin's result \cite{Dobrushin} that states 
\eqn{pf_LDP1_TV}{\eta_\tv(\sK) = \sup_{x, x'\in \X} \tv(\sK(\cdot|x),\sK(\cdot|x')).}

 \begin{proof}[Proof of Theorem~\ref{thm:LDP_Contraction}]
It follows from Theorem~\ref{Thm:Eta_Gama_formula} that 
\eqn{}{\eta_{e^\eps}(\sK)\leq \delta ~\Longleftrightarrow ~ \sup_{x, x'\in \X} \sE_{e^\eps}(\sK(\cdot|x)\|\sK(\cdot|x'))\leq \delta,}
which, according to \eqref{eq:DefEgamma2}, implies  
\eq{\eta_{e^\eps}(\sK)\leq \delta ~\Longleftrightarrow ~ \sup_{x, x'\in A}\sup_{A\subset\Z} \left[\sK(A|x)-e^\eps \sK(A|x')\right]\leq \delta.}
 Hence, in light of Definition~\ref{Def:LDP}, $\sK$ is $(\eps, \delta)$-LDP if and only if $\eta_{e^\eps}(\sK)\leq \delta$.  

 \end{proof}

 \begin{proof}[Proof of Lemma~\ref{Lemma:UB_f_Div}]
We first show the following upper and lower bounds for $\sE_\gamma$-divergence in terms of the total variation distance.

\noindent \textbf{Claim.} For any distributions $P$ and $Q$ on $\X$ and any $\gamma\geq 1$, we have 
\begin{equation}\label{Claim}
    1-\gamma(1-\tv(P\|Q))\leq \sE_\gamma(P\|Q)\leq \tv(P, Q).
\end{equation}
\begin{proof}[Proof of Claim] The upper bound is immediate from the definition of $\sE_\gamma$-divergence (and holds for any $\gamma\geq 0$). Note that 
\begin{align*}
 \gamma \tv(P\|Q) &  = \max_{A\subset \X}\left[\gamma P(A)-\gamma Q(A)\right]   \\
 &  =  \max_{A\subset \X}\left[P(A)-\gamma Q(A)+(\gamma-1) P(A)\right] \\
 &\leq \max_{A\subset \X}\left[P(A)-\gamma Q(A)\right] +(\gamma-1)\\
 & = \sE_\gamma(P\|Q) + \gamma - 1,
\end{align*}
where the last equality follows from \eqref{eq:DefEgamma2}. 
This immediately results in the lower bound in \eqref{Claim}.
\end{proof}

According to this claim, we can write for $\gamma\geq 1$
$$\tv(P,Q)\leq 1-\frac{1-\sE_\gamma(P\|Q)}{\gamma}.$$ 
Replacing $P$ and $Q$ with $\sK(\cdot|x)$ and $\sK(\cdot|x')$, respectively, for some $x$ and $x'$ in $\X$, we obtain 
$$\tv(\sK(\cdot|x),\sK(\cdot|x'))\leq 1-\frac{1-\sE_\gamma(\sK(\cdot|x)\|\sK(\cdot|x'))}{\gamma}.$$
Taking supremum over $x$ and $x'$ from both side and invoking Theorem~\ref{Thm:Eta_Gama_formula} and \eqref{pf_LDP1_TV}, we conclude that 
\eqn{Eta_TV_Eta_Gamma}{\eta_\tv(\sK)\leq 1-\frac{1-\eta_\gamma(\sK)}{\gamma}.}
It is known \cite{COHEN_Paper,Raginsky_SDPI} that for any Markov kernel $\sK$ and any convex functions $f$ we have 
\eqn{Eq:UBEtaTV}{\eta_f(\sK)\leq \eta_\mathsf{TV}(\sK),}
from which the desired result follows immediately.
\end{proof}

\begin{proof}[Proof of Lemma~\ref{lemma:UB_Eta_Kn}]
Given $n$ mechanisms $\sK_1, \sK_2, \dots, \sK_n$, we consider the non-interactive mechanism $P_{Z^n|X^n}$ given by 
$$P_{Z^n|X^n}(z^n|x^n) = \prod_{i=1}^n \sK_i(z_i|x_i).$$
If $\sK_i\in \Q_{\eps, \delta}$ for $i\in [n]$, then we have $\eta_{e^\eps}(\sK_i) \leq \delta$. According to \eqref{Eta_TV_Eta_Gamma}, it thus leads to 
$\eta_\tv(\sK_i)\leq \varphi(\eps, \delta)$. 
Invoking \cite[Corollary 9]{Polyankiy_SDPI_Networks} (see also \cite[Lemma 3]{Raginsky_ISIT_converses} and \cite[Eq. (62)]{Makur_SDPIjournal}), we obtain 
\begin{align*}
    \eta_\tv(P_{Z^n|X^n})&\leq \max_{i\in [n]}\left[1-(1-\eta_{\tv}(\sK_i))^n\right]\\
    &\leq \left[1-(1-\varphi(\eps, \delta))^n\right]\\
    & = \varphi_n(\eps, \delta).
\end{align*}

\end{proof}

\begin{proof}[Proof of Lemma~\ref{Lem:LB_Minimax}]
Recall that $X^n$ is an i.i.d.\ sample of distribution $P$ and each $Z_i$, $i\in [n]$ is obtained by applying $\sK_i$ to $X_i$. Note that by assumption $\sK_i$ specifies the conditional distribution  $P_{Z_i|X_i, Z^{i-1}}$. Let $M^n_0$ and $M^n_1$ denote the distribution of $Z^n$ when $P = P_0$ and $P = P_1$, respectively.  Thus, we have for $P= P_0$ and any $z^n\in \Z^n$
\aln{M^n_0(z^n) & = \prod_{i=1}^n P_{Z_i|Z^{i-1}}(z_i|z^{i-1}) \\
&= \prod_{i=1}^n \left(P_{X_i|Z^{i-1}=z^{i-1}}\sK_i\right)(z_i) \\
&= \prod_{i=1}^n \left(P_{0}\sK_i\right)(z_i).}
Having this in mind, we can write  
\aln{
\tv^2(M^n_0, M^n_1) &\stackrel{(a)}{\leq} \frac{1}{2}D_\kl(M^n_0\| M^n_1)\\
& \stackrel{(b)}{=}  \frac{1}{2}\sum_{i=1}^n D_\kl(P_0\sK_i\| P_1\sK_i)\\
& \stackrel{(c)}{\leq} \frac{1}{2}\sum_{i=1}^n \varphi(\eps, \delta)D_\kl(P_0\| P_1)\label{KL_Proof1}
}
where $(a)$ follows from Pinsker's inequality, $(b)$ is due to the chain rule of KL divergence, $(c)$ is an application of Lemma~\ref{Lemma:UB_f_Div}. Plugging \eqref{KL_Proof1} in \eqref{LeCam_TV}, we obtain the desired result. 
\end{proof}
 
 \begin{proof}[Proof of Corollary~\ref{corollary:LB_Pk}]
 
 Fix $\omega\in (0,1]$ and consider two distributions $P_0$ and $P_1$ on $\{-\omega^{-\frac{1}{k}}, 0, \omega^{-\frac{1}{k}}\}$ defined as
$$P_0(-\omega^{-\frac{1}{k}}) = \omega, \qquad P_0(0) = 1-\omega,$$
and 
$$P_1(\omega^{-\frac{1}{k}}) = \omega, \qquad P_1(0) = 1-\omega.$$
It can be verified that both $P_0$ and $P_1$ belong to $\P_k$.  Note that $\ell_2^2(\theta(P_0), \theta(P_1)) = 2\omega^{\frac{2(k-1)}{k}}$.
Let $M_0^n = P^{\otimes n}_0\sK^n$ and $M_1^n=P^{\otimes n}_1\sK^n$ be the corresponding output distributions of the mechanism $\sK^n=\sK_1\dots\sK_n$, the composition of mechanisms $\sK_i$. Le Cam's bound for $\ell_2^2$-metric yields   
\begin{align}
    \mathcal R_n(\P_k, \ell_2^2, \eps, \delta)&\geq \omega^{\frac{2(k-1)}{k}}(1-\tv(M^n_0, M^n_1)\nonumber\\
    &\geq \omega^{\frac{2(k-1)}{k}}\left(1-H(M^n_0, M^n_1)\right),\label{Hell1}
\end{align}
where the last inequality follows from the fact $\tv(P,Q)\leq H(P,Q)$ for $H(P, Q)$ being the Hellinger distance. Notice that $M_0^n = \prod_{i=1}^n(P_0\sK_i)$ and $M_1^n = \prod_{i=1}^n(P_1\sK_i)$ where each $\sK_i$ for $i\in [n]$ is $(\eps, \delta)$-LDP. 
It is well known that 
$$H^2\left(\prod_{i=1}^nP_i, \prod_{i=1}^nQ_i\right) = 2-2\prod_{i=1}^n\left(1-\frac{1}{2}H^2(P_i, Q_i)\right).$$
Thus, 
\begin{align}
    H^2(M_0^n, M_1^n)&= 2-2\prod_{i=1}^n\left(1-\frac{1}{2}H^2(P_0\sK_i, P_1\sK_i)\right)\nonumber\\
&\leq 2-2\prod_{i=1}^n\left(1-\frac{\varphi(\eps, \delta)}{2}H^2(P_0, P_1)\right)\nonumber\\
&= 2-2\left(1-\frac{\varphi(\eps, \delta)}{2}H^2(P_0, P_1)\right)^n\nonumber\\
&= 2-2\left(1-\omega \varphi(\eps, \delta)\right)^n\label{Hell2}
\end{align}
Hence, we obtain 
\eqn{TV_UB_EXample}{\tv(M_0^n, M_1^n)\leq \sqrt{2-2\left(1-\omega \varphi(\eps, \delta)\right)^n}.}
Plugging \eqref{Hell2} into \eqref{Hell1}, we obtain 
\eqn{}{\mathcal R_n(\P_k, \ell_2^2, \eps, \delta)\geq \omega^{\frac{2(k-1)}{k}}\left[1-\sqrt{2}\sqrt{1-(1-\omega\varphi(\eps,\delta))^n}\right].}
Now, choose
$\omega = \min\left\{1, \frac{1}{\varphi(\eps, \delta)}\left[1-\left(\frac{7}{8}\right)^{\frac{1}{\sqrt{n}}}\right]\right\}$. Notice that we assume $\delta>0$ and hence $\varphi(\eps, \delta)>0$ regardless of $\eps$. Plugging this choice of $\omega$ into the above bound, we obtain 
\begin{align}
\mathcal R_n(\P_k, \ell_2^2, \eps, \delta) &\gtrsim (\varphi(\eps, \delta))^{-\frac{2(k-1)}{k}}\left[1-\left(\frac{7}{8}\right)^{\frac{1}{\sqrt{n}}}\right]^{\frac{2(k-1)}{k}}\nonumber\\
&\gtrsim (\varphi(\eps, \delta))^{-\frac{2(k-1)}{k}} n^{-\frac{k-1}{k}}.
\end{align}
  \end{proof}
 
\begin{proof}[Proof of Lemma~\ref{Lemma_MI}]
Note that we have the Markov chain $V\markov X^n\markov Z^n$. It has been shown in \cite[Problem 15.12]{csiszarbook} (see also \cite{Anantharam_SDPI,csiszarbook}) that for any channel $P_{B|A}$ connecting random variable $A$ to $B$, we have
\eqn{Eta_KL_MI}{\eta_\kl(P_{B|A}) = \sup_{\substack{P_{AU}:\\U\markov A\markov B}}\frac{I(U; B)}{I(U; A)}.}
Replacing $A$ and $B$ with $X^n$ and $Z^n$, respectively, in the above equation, we obtain
\begin{align*}
I(Z^n; V)&\leq \eta_\kl(\sK^{\otimes n})I(X^n; V)\\
&=  \eta_\kl(\sK^{\otimes n}) \frac{n}{|\V|}\sum_{V=v}D_\kl(P_v\|\bar P),
\end{align*}
where $\sK^{\otimes n} = P_{Z^n|X^n}$ and $\bar P = \frac{1}{|\V|}\sum_{v\in \V}P_v$. The desired result then follows from Lemma~\ref{lemma:UB_Eta_Kn} and the convexity of KL-divergence. 
\end{proof}

\begin{proof}[Proof of Corollary~\ref{corollary:Minimax_highD}]
The proof strategy is as follows: we first construct a set of probability distribution $\{P_v\}$ for $v$ taking values in a finite set $\V$ and then apply Fano's inequality \eqref{Eq:Fano} where $V$ is a uniform random variable on $\V$.  Duchi el el. \cite[Lemma 6]{Duchi_LDP_MinimaxRates} showed that there exists a set $\V_k$ of the $k$-dimensional hypercube $\{-1, +1\}^k$ satisfying $\|v-v'\|_1\geq \frac{k}{2}$ for each $v, v'\in \V_k$ with $v\neq v'$ and some integer $k\in [d]$, while $|\V_k|$ being at least $\lceil e^{\frac{k}{16}}\rceil$. 
If $k<d$, one can extend $\V_k\subset \R^k$ to a subset of $\R^d$ by considering $\V = \V_k\times \{0\}^{d-k}$. 
Fix $\omega \in (0,1)$ and define a distribution $P_v\in \P(\mathsf{B}^d_2(r))$ for $v\in \V$ as follows: Choose an index $j\in [k]$ uniformly and set $P_{v}(r \mathsf{e}_j) = \frac{1+\omega v_j}{2}$ and $P_{v}(-r \mathsf{e}_j) = \frac{1-\omega v_j}{2}$ where $\mathsf{e}_j$ is the standard basis vector in $\R^d$. Given $v\in \V_k$, let $X\sim P_v$ be a random variable taking values in $\{\pm r \mathsf{e}_j\}_{j=1}^k$ and $X^n$ be an i.i.d.\ sample of $X$. Furthermore, as before, let $Z^n$ be a privatized sample of $X^n$ obtained by $\sK^{\otimes n}$ with $\sK$ being an $(\eps, \delta)$-LDP mechanism. To apply Fano's inequality, we first need to bound $I(Z^n; V)$. According to Lemma~\ref{Lemma_MI}, we have 
\begin{align}
  I(Z^n; V)&\leq \varphi_n(\eps, \delta)\frac{n}{|\V_k|^2}\sum_{v, v'}\kl(P_v\|P_{v'})\nonumber\\
  &\leq \varphi_n(\eps, \delta)I(X^n; V).\label{Example2_UB_MI}
  \end{align}
Hence, bounding $I(Z^n; V)$ reduces to bounding $I(X^n; V)$. To this end, first notice that $I(X^n; V)\leq n I(X; V)$. Let $K$ be a uniform random variable on $[k]$ independent of $V$ that chooses the coordinate of $V$. 
Note that $K$ can be determined by $X$ and hence 
\begin{align*}
    I(X; V) &= I(X, K; V) \\
    &= I(X; V|K) \\
    & \leq \log2-h_{\mathsf{b}}(\frac{1-\omega}{2})\\
    &\leq \omega\log 2,
\end{align*}
where the last inequality follows from the fact that $h_{\mathsf{b}}(a)\geq 2a\log2$ for $a\in [0, \frac{1}{2}]$ due to the concavity of entropy. 
Consequently, we can write 
\eqn{Example2_UB_MI2}{I(Z^n; V)\leq n\omega\varphi_n(\eps, \delta)\log 2 .}
Applying Fano's inequality, we obtain \aln{\mathcal R_n(\P, \ell, \eps, \delta)&\geq \frac{r^2\omega^2}{k} \left[1-\frac{(1+n\omega\varphi_n(\eps, \delta))\log 2}{\log |\V|}\right]\\
& \geq   \frac{r^2\omega^2}{k} \left[1-\frac{16(1+n\omega\varphi_n(\eps, \delta))\log 2}{k}\right].}
Setting $\omega = \min\{1, \frac{k}{50n\varphi_n(\eps, \delta)}\}$ and assuming $k\geq 16$, we can write
\eqn{}{\mathcal R_n(\P, \ell, \eps, \delta)\gtrsim   r^2\max_{k\in [d]}\min\left\{\frac{1}{k}, \frac{k}{n^2\varphi^2_n(\eps, \delta)}\right\} .}
By choosing $k = \min\{n\varphi_n(\eps, \delta), d\}$, we obtain 
\eqn{}{\mathcal R_n(\P, \ell, \eps, \delta)\gtrsim   r^2\min\left\{\frac{1}{n\varphi_n(\eps, \delta)}, \frac{d}{n^2\varphi^2_n(\eps, \delta)}\right\}.} 
\end{proof}

\begin{proof}[Proof of Theorem~\ref{Thm:Bayesian_LB}]
Let $\hat \Theta = \Psi(Z^n)$ be an estimate of $\Theta$ for some $\Psi$ and $p_\zeta\coloneqq P_{\Theta\hat \Theta}(\ell(\Theta, \hat \Theta)\leq \zeta)$ and $q_\zeta\coloneqq (P_{\Theta}P_{\hat \Theta})(\ell(\Theta, \hat \Theta)\leq \zeta)$, i.e., $p_\zeta$ and $q_\zeta$ correspond to the probability of the event $\{\ell(\Theta, \hat\Theta)\leq \zeta\}$ under the joint and product distributions, respectively. By definition, we have for any $\gamma\geq 1$
\al{I_\gamma(\Theta; \hat \Theta) &= 
\sE_\gamma(P_{\Theta\hat\Theta}\| P_\Theta P_{\hat\Theta}) \\ 
&=\sup_{A\subset \T\times \T} \left[P_{\Theta\hat \Theta}(A) - \gamma (P_\Theta P_{\hat \Theta})(A)\right]\\
&\geq p_\zeta - \gamma q_\zeta\\
&\geq p_\zeta-\gamma\L(\zeta),}
where the last inequality follows from the fact that $q_\zeta\leq \L(\zeta)$, that can be shown as follows
\al{q_\zeta &= \int_{\T}\int_{\T}1_{\{\ell(\theta, \hat\theta)\leq\zeta\}}P_{\Theta}(\text{d}\theta) P_{\hat\Theta}(\text{d}\hat\theta)\\
&\leq \sup_{t\in \T}\int_{\T}1_{\{\ell(\theta, t)\leq\zeta\}}P_{\Theta}(\text{d}\theta)\\
& = \L(\zeta).}
Recalling that $\Pr(\ell(\Theta, \hat \Theta)>\zeta) = 1-p_\zeta$, the above thus implies 
\eqn{}{\Pr(\ell(\Theta, \hat \Theta)>\zeta)\geq 1-I_\gamma(\Theta; \hat \Theta)-\gamma\L(\zeta).}
Since, by Markov's inequality, $\E[\ell(\Theta, \hat \Theta)]\geq \zeta\Pr(\ell(\Theta, \hat \Theta)\geq \zeta)$, we can write by setting $\gamma = e^\eps$
\al{R_n^{\mathsf{Bayes}}(P_\Theta, \ell, \eps, \delta)&\geq \zeta \left[1-I_{e^\eps}(\Theta; \hat \Theta)-e^\eps\L(\zeta)\right]\\
&\geq \zeta \left[1-I_{e^\eps}(\Theta; Z^n)-e^\eps\L(\zeta)\right],}
where the second inequality comes from the data processing inequality for $I_\gamma$. To further lower bound the right-hand side, we write
\al{I_{e^\eps}(\Theta; Z^n) &= \int_{\T} \sE_{e^\eps}(P_{Z^n|\Theta=\theta}\|P_{Z^n})P_{\Theta}(\text{d}\theta)\\
& \leq \eta_{e^\eps}(P_{Z^n|X^n})\int_{\T} \sE_{e^\eps}(P_{X^n|\Theta=\theta}\|P_{X^n})P_{\Theta}(\text{d}\theta)\\
& = \eta_{e^\eps}(P_{Z^n|X^n}) I_{e^\eps}(\Theta; X^n),}
where the inequality follows from the definition of contraction coefficient.
When $n=1$, we have $\eta_{e^\eps}(\sK)\leq \delta$ as $\sK=P_{Z|X}$ is assumed to be $(\eps, \delta)$-DP.  
For $n>1$, we invoke Lemma~\ref{lemma:UB_Eta_Kn} to obtain $\eta_{e^\eps}(P_{Z^n|X^n}) \leq \varphi_n(\eps, \delta)$.
\end{proof}

\begin{proof}[Proof of Corollary~\ref{Cor:BHT} ]
Let $\beta_n(\alpha)\coloneqq \beta^{\infty, 1}_n(\alpha)$ be the non-private trade-off between type I and type II error probabilities (i.e., $Z^n = X^n$). According to Chernoff-Stein lemma (see, e.g., \cite[Theorem 11.8.3]{cover2012elements}), we have
\eqn{Chernoff}{\lim_{n\to \infty}\frac{1}{n}\log\beta_n(\alpha) = -D_\kl(P_0\|P_1).}
Assume now that, $Z^n$ is the output of  $\sK^{\otimes n}$ for an $(\eps, \delta)$-LDP mechanism $\sK$.  
According to \eqref{Chernoff}, we obtain that
\eqn{Chernoff2}{\lim_{n\to \infty}\frac{1}{n}\log\beta^{\eps, \delta}_n(\alpha) = -\sup_{\sK\in \Q_{\eps, \delta}}D_\kl(P_0\sK\|P_1\sK).}
Applying Lemma~\ref{Lemma:UB_f_Div}, we obtain the desired result. 
\end{proof}

\begin{proof}[Proof of Corollary~\ref{Cor:MI}]
Consider the (trivial) Markov chain $X\markov X\markov Z$. According to \eqref{Eta_KL_MI}, we can write $I(X; Z)\leq \eta_\kl(\sK)H(X)$. The desired result then immediately follows from Lemma~\ref{Lemma:UB_f_Div}.
\end{proof}

\end{document}